\def\hyper@linkstart#1#2{\begingroup\color{\csname @#1color\endcsname}}
\def\hyper@linkend{\endgroup}
\def\hyper@link#1#2#3{{\color{\csname @#1color\endcsname}#3}}
\newenvironment{sfigure}{%
  \begin{figure}[H]%
}{%
  \end{figure}%
}
\newenvironment{lfigure}{%
  \begin{figure*}%
}{%
  \end{figure*}%
}
\newcommand{\defrefnum}[2]{\expandafter\def\csname refnum@#1\endcsname{#2}}
\newcommand{\myref}[1]{%
  \ifcsname refnum@#1\endcsname
    \csname refnum@#1\endcsname
  \else
    \ref{#1}%
  \fi
}
\newcommand{\silentref}[1]{\iftrue\ref{#1}\fi}
\newcommand{\myciten}[1]{\citenum{#1}}
\newcommand{\addedc}[1]{{\color{black} {#1}}}
\newcommand{\AL}[1]{\begin{align}#1\end{align}}
\newcommand{\LN}[1]{\label{#1}\\}
\newcommand{\LB}[1]{\label{#1}}
\newcommand{\NN}{\nonumber\\}
\newcommand{\TW}[1]{}
\newcommand{\AS}[1]{\left(#1\right)}
\newcommand{\AD}[1]{\left[#1\right]}
\newcommand{\vect}[1]{{\boldsymbol{#1}}}
\newcommand{\tr}{^{\top}}
\newcommand{\Real}{\mathbb{R}}
\newcommand{\Inte}{\mathbb{Z}}
\newcommand{\Nnum}{\mathbb{N}}
\newcommand{\iunit}{\mathrm{i}}
\newcommand{\vdi}{\vect{d}_i}
\newcommand{\vdj}{\vect{d}_j}
\newcommand{\sumi}{\sum_{i\in\viset}}
\newcommand{\sumj}{\sum_{j\in\viset}}
\newcommand{\sumij}{\sum_{i, j\in\viset}}
\newcommand{\argmax}{\mathop{\textrm{arg\;max}}}
\newcommand{\kdim}{\prod_{\idim=1}^{K}}
\newcommand{\ddim}{\prod_{\idim=1}^{D}}
\newcommand{\dprod}{\prod_{\idim=1}^{D}}
\newcommand{\ksum}{\sum_{\idim=1}^{K}}
\newcommand{\lsh}{L}
\newcommand{\lsp}{L}
\newcommand{\lpa}{L'}
\newcommand{\ran}{R}
\newcommand{\vlsh}{\vect{\lsh}}
\newcommand{\vlpa}{\vect{\lpa}}
\newcommand{\vran}{\vect{\ran}}
\newcommand{\grp}[1]{_{[#1]}}
\newcommand{\idim}{k}
\newcommand{\lshd}{\lsh_\idim}
\newcommand{\lpad}{\lpa_\idim}
\newcommand{\rand}{\ran_\idim}
\newcommand{\vrel}{\vect{\rel}}
\newcommand{\vpos}{\vect{\pos}}
\newcommand{\rel}{r}
\newcommand{\pos}{d}
\newcommand{\ransp}{\mathcal{R}}
\newcommand{\lshsp}{\mathcal{L}}
\newcommand{\dimth}{^{(D)}}
\newcommand{\mkth}{^{(K-1)}}
\newcommand{\lkth}{^{(K)}}
\newcommand{\cfil}{\mathcal{C}}
\newcommand{\ravl}{\tilde{g}}
\newcommand{\ravlr}{\tilde{h}}
\newcommand{\ravm}{g}
\newcommand{\ravr}{h}
\newcommand{\ravx}{h'}
\newcommand{\mymod}[2]{{{#1}\bmod{#2}}}
\newcommand{\sgn}{\mathrm{sgn}}
\newcommand{\minimize}[1]{{\underset{#1}{\textrm{minimize}}}}
\newcommand{\maximize}[1]{{\underset{#1}{\textrm{maximize}}}}
\newcommand{\st}{{\textrm{subject to}}}
\newcommand{\Inteco}[1]{{[#1)}}
\newcommand{\InteN}[1]{{\Inteco{#1}}}
\newcommand{\spos}{\vect{k}}
\newcommand{\amp}{c}
\newcommand{\inten}{I}
\newcommand{\refinten}{I_R}
\newcommand{\frefinten}{\tilde{I}_R}
\newcommand{\flen}{\mathrm{f}}
\newcommand{\ssize}{W}
\newcommand{\plcost}{c}
\newcommand{\imset}{I}
\newcommand{\viset}{\mathcal{N}}
\newcommand{\indi}{\mathbb{I}}
\newcommand{\fline}{f^{(\text{line})}}
\newcommand{\fblob}{f^{(\text{blob})}}
\newcommand{\cmax}{c^{(\text{max})}}
\newcommand{\coline}{c^{(\text{line})}}
\newcommand{\coblob}{c^{(\text{blob})}}
\newcommand{\sline}{\sigma^{(\text{line})}}
\newcommand{\sblob}{\sigma^{(\text{blob})}}
\newcommand{\nline}{N^{(\text{line})}}
\newcommand{\nblob}{N^{(\text{blob})}}
\newcommand{\fou}[1]{\mathcal{F}[{#1}]}
\newcommand{\ifou}[1]{\mathcal{F}^{-1}[{#1}]}
\newcommand{\ifouAS}[1]{\mathcal{F}^{-1}\AD{#1}}
\newcommand{\hamil}{H}
\newcommand{\obj}{F}
\newcommand{\gidx}{g}
\newcommand{\vx}{\vect{x}}
\newcommand{\vy}{\vect{y}}
\newcommand{\vz}{\vect{z}}
\newcommand{\mycite}{\cite}
\theoremstyle{thmstyleone}%
\newtheorem{theorem}{Theorem}%
\theoremstyle{thmstyletwo}%
\newtheorem{prop}[theorem]{Proposition}%
\theoremstyle{thmstylethree}%
\newtheorem{dfn}{Definition}%
\renewcommand\thesection{}
\renewcommand\thesubsection{}
\renewcommand{\@seccntformat}[1]{}
\renewcommand\section{\@startsection {section}{1}{\z@}%
                                   {-3.5ex \@plus -1ex \@minus -.2ex}%
                                   {2.3ex \@plus.2ex}%
                                   {\normalfont\large\bfseries}}
\renewcommand\subsection{\@startsection{subsection}{2}{\z@}%
                                     {-1.25ex\@plus -1ex \@minus -.2ex}%
                                     {-2.3ex \@plus .2ex}%
                                     {\normalfont\normalsize\bfseries}}
\renewcommand\subsubsection{\@startsection{subsubsection}{3}{\z@}%
                                     {-1.25ex\@plus -1ex \@minus -.2ex}%
                                     {-2.3ex \@plus .2ex}%
                                     {\normalfont\normalsize\bfseries}}
\begin{document}

% \twocolumn[

% \title{Spatial QUBO: handling dense interaction network in large-scale binary optimization with convolutional structure}
% \title{Spatial QUBO: Scalable Formulation for Convolutional Binary Optimization with Dense Interactions}
% \title{Spatial QUBO: Convolutional Formulation of Large-Scale Binary Optimization with Dense Interactions}
\title{Convolutional Formulation of Large-Scale Quadratic Unconstrained Binary Optimization with Dense Interactions}
\author{Hiroshi Yamashita and Hideyuki Suzuki}
\maketitle

\fontsize{10pt}{11pt}\selectfont

% \begin{onecolabstract}
\begin{abstract}
\addedc{
The spatial photonic Ising machine (SPIM) is a promising optical hardware solver for large-scale combinatorial optimization problems with dense interactions. As the SPIM can represent Ising problems with rank-one coupling matrices, multiplexed versions have been proposed to enhance applicability to higher-rank interactions. However, the multiplexing cost reduces implementation efficiency, and even without multiplexing, the SPIM can represent coupling matrices beyond rank-one. To clarify the intrinsic representation power of the SPIM, we propose spatial quadratic unconstrained binary optimization (spQUBO), a formulation of Ising problems with spatially convolutional structures. We prove that any spQUBO reduces to a two-dimensional spQUBO with the convolutional structure preserved, which can be efficiently implemented on the SPIM without multiplexing. We demonstrate its applicability to distance-based combinatorial optimization, including placement problems and clustering problems. These results advance our understanding of the class of optimization problems where SPIMs exhibit unique advantage in efficiency and scalability. Furthermore, the convolutional structure of spQUBO also enables efficient computation using Fast Fourier Transforms. %\LC(0-abstract-2)
}

\end{abstract}
% \end{onecolabstract}

% ]

% \newpage

% \;\\

% \newpage

\begin{multicols}{2}

\section{\addedc{Introduction}}

To meet the growing demand for high-performance and high-efficiency computing in data science and artificial intelligence, there have been active efforts to develop domain-specific computing systems. %\LC(section10_introduction-3)
Ising solvers, often also called Ising machines, are dedicated hardware designed to solve Ising problems, or equivalently, quadratic unconstrained binary optimization (QUBO) problems, which have broad applicability to important combinatorial optimization problems \mycite{LucasNP}. %\LC(section10_introduction-4)
The simplicity of this approach allows Ising solvers to be implemented using various physical phenomena \mycite{IMreviewMohseni}, %\LC(section10_introduction-5)
including quantum effects \mycite{Dwave}, laser beams \mycite{CIMPRA2013,CIMScience2016,CIMAPL2020}, dynamical systems \mycite{MA,OIM,CAC,CACScaling,CBM,ChaoticIM25}, and advanced digital electronics technologies \mycite{DA,Statica,CBMAnnealer}. %\LC(section10_introduction-6)

However, the physical implementation of large-scale Ising solvers that can handle dense interactions is not straightforward. %\LC(section10_introduction-10)
As the number of design variables, referred to as spins, increases, the number of pairwise spin interactions grows quadratically, often making hardware implementation infeasible. %\LC(section10_introduction-11)
Therefore, in some implementations the interaction structure is restricted to a specific class of sparse networks \mycite{Chimera,MassivelyParallel}, which requires that QUBOs with dense interactions be transformed accordingly, incurring additional overhead. %\LC(section10_introduction-13)
To implement dense interactions, high-speed physical Ising solvers often rely on a large number of digital computing devices, including FPGAs and GPUs, as in the 100,000-spin implementations \mycite{SBM,100KCIM}. %\LC(section10_introduction-14)
Thus, the scalability of many Ising solvers is constrained by the computational resources required to handle dense interactions. %\LC(section10_introduction-15)

The spatial photonic Ising machine (SPIM) \mycite{PierangeliPRL,Gauge21,SPIMQuadrature,SPIMWavelength,SPIMSakabe,SPBM,SPIMYe2023v2,OguraCREST24,FocalDivision25,Circulant25,PhotonicSG25,Arbitrary25,Correlation21,Yao360K25} is an optical Ising solver that has superior scalability in handling large-scale Ising problems with dense interactions. %\LC(section10_introduction-19)
By utilizing the spatial parallelism of light propagation, it is expected to exhibit superior efficiency to other Ising machines \mycite{FocalDivision25,Circulant25}. %\LC(section10_introduction-20)
Despite these expectations, the primitive version of SPIM can represent Ising problems with only rank-one coupling matrices, which limits its applicability to real-world problems. %\LC(section10_introduction-22)
Accordingly, multiplexed versions have been proposed to enhance the applicability to higher-rank interactions \mycite{SPIMQuadrature,SPIMWavelength,SPIMSakabe,SPBM,SPIMYe2023v2,OguraCREST24,FocalDivision25,Circulant25}, although multiplexing reduces the implementation efficiency. %\LC(section10_introduction-23)

In this paper, we investigate the intrinsic representation power of the original SPIM without multiplexing. %\LC(section10_introduction-27)
Although it is known to represent coupling matrices beyond rank-one \mycite{PierangeliPRL,Circulant25,PhotonicSG25}, we clarify its potential capabilities. %\LC(section10_introduction-28)
Specifically, we introduce a new class of QUBO problems with spatially convolutional structures, termed spatial QUBO (spQUBO). %\LC(section10_introduction-29)
We show that a specific subclass of spQUBO, the two-dimensional periodic spQUBO, can be efficiently implemented on the SPIM without multiplexing. %\LC(section10_introduction-30)
Furthermore, we present a reduction algorithm that can transform any spQUBO into a two-dimensional periodic spQUBO while preserving the convolutional structure. %\LC(section10_introduction-31)

Due to its optical nature, the SPIM architecture has an expectation of efficiency in representing spatially convolutional interactions. %\LC(section10_introduction-35)
The spQUBO formulation reveals that the problem size that can be implemented on an SPIM system is determined by the spatial volume of the configuration domain, where the interactions between variables are represented; in other words, unlike ordinary Ising machines, the problem size is not directly limited by the number of variables. %\LC(section10_introduction-36)
Therefore, to implement large-scale combinatorial optimization problems on SPIM without reducing its scalability and efficiency, it is crucial to focus on the convolutional structures and obtain compact spQUBO representations with minimal spatial volume. %\LC(section10_introduction-37)
Many real-world combinatorial optimization problems are expected to have convolutional structures, e.g., those defined on spatially distributed variables, potentially leading to broad applications. %\LC(section10_introduction-38)

These results advance our understanding of the class of optimization problems where SPIMs exhibit unique advantage in efficiency and scalability. %\LC(section10_introduction-42)
Furthermore, spQUBO's efficiency is not limited to the SPIM architecture; we show that its convolutional structure allows efficient computation using Fast Fourier Transforms (FFT). %\LC(section10_introduction-43)

\section{Results}

\subsection{Quadratic unconstrained binary optimization}

The Ising problem is an optimization problem for $N$ spin variables, as shown below: %\LC(section21a_background-5)
\AL{
\minimize{\sigma_1,\ldots,\sigma_N} \ \hamil &= -\frac{1}{2}\sumij  J_{ij} \sigma_i\sigma_j - \sumi h_i\sigma_i\LN{eq:Ising}
\st \ \sigma_i &\in \{-1,+1\}\quad (i \in \viset),
}
where $\viset=\{1,\ldots,N\}$ is the index set of the variables, $J$ is the coupling matrix, and $h$ is the bias vector.  %\LC(section21a_background-13)
In this problem, the interactions between the spins are represented as the coupling matrix $J$. %\LC(section21a_background-14)

There are also situations where it is more natural to use $x_i\in\{0,1\}$ instead of $\sigma_i\in\{-1,+1\}$ as decision variables, or to consider the maximization problem instead of the minimization problem. %\LC(section21a_background-18)
In this case, we can consider %\LC(section21a_background-19)
\AL{
\maximize{x_1,\ldots,x_N}\ \obj &=\frac{1}{2} \sumij W_{ij} x_ix_j + \sumi b_i x_i \LN{eq:QUBO}
\st\ x_i &\in \{0, 1\}\  (i \in \viset),
}
where $W$ and $b$ are the coupling matrix and the bias vector, respectively. %\LC(section21a_background-26)
This problem is often referred to as a quadratic unconstrained binary optimization (QUBO). %\LC(section21a_background-28)
The Ising and QUBO formulations are equivalent under a simple change of variables, $x_i = (\sigma_i+1)/2$, and a sign flip, with a change in the parameters. %\LC(section21a_background-30)
Therefore, we use the two terms interchangeably, as is common in the literature. %\LC(section21a_background-31)

The objective function in the Ising problem corresponds to the Hamiltonian or the energy in the context of spin systems in statistical physics. %\LC(section21a_background-35)
Additionally, we can consider a probability model where the probability of the spin configuration is proportional to $\exp(-\hamil/T)$ with a temperature $T$. %\LC(section21a_background-36)
In such a model, if $J_{ij}$ or $W_{ij}$ is positive, the spin values $x_i$ and $x_j$ in the low-energy state tend to align. %\LC(section21a_background-37)
This coupling between spins is called ferromagnetic. %\LC(section21a_background-38)
In contrast, when $J_{ij}$ or $W_{ij}$ is negative, the coupling is called antiferromagnetic. %\LC(section21a_background-39)

In this study, we focus on the computation of quadratic terms of the Ising and QUBO Hamiltonians, because linear bias terms do not become a bottleneck even when computed on a digital computer. %\LC(section21a_background-43)

\subsection{SPIM}
\label{sec-back-spim}

The SPIM is an optical Ising solver that utilizes a spatial light modulator (SLM) for solving large-scale Ising problems \mycite{PierangeliPRL}. %\LC(section21b_spim-4)
Fig.~\myref{fig:SLM} presents a schematic for the SPIM. %\LC(section21b_spim-5)
The elements of the SLM are arranged in a two-dimensional grid, where the position of the $i$-th element is represented by an integer point $\vpos_i\in\Inte^2$. %\LC(section21b_spim-6)
The SLM modulates the phase of the incoming laser light according to the spin value $\sigma_i$, whose amplitude $\xi_i$ is pre-modulated by another SLM. %\LC(section21b_spim-9)
After propagating the modulated light through the optical system with a lens as shown in Fig.~\myref{fig:SLM}, we can observe the light intensity $\inten(\vect{x})$ with an image sensor. %\LC(section21b_spim-11)
By comparing it to a reference image $\refinten(\vx)$, we have the value approximately equivalent to $\hamil=-(1/2)\sumij J_{ij}\sigma_i\sigma_j$, where $J$ is represented as %\LC(section21b_spim-12)
\AL{
J_{ij}&=\xi_i\xi_j \frefinten(2W(\vpos_i-\vpos_j)),\LB{eq:SPIM-hamiltonian} %\LC(section21b_spim-15)
}
where $W$ is a constant coefficient and $\frefinten$ is the Fourier transform of $\refinten$. %\LC(section21b_spim-18)
It has the same form as the Ising Hamiltonian (\myref{eq:Ising}) and the coupling coefficients $J_{ij}$ are determined by the reference image $\refinten$. %\LC(section21b_spim-19)
The primitive version of SPIM uses only the observation on the origin $\inten(\vect{0})$ and calculates Hamiltonian with rank-one coupling matrix $J_{ij}=\xi_i\xi_j$. %\LC(section21b_spim-21)
We can multiplex this model to represent higher-rank interactions \mycite{SPIMQuadrature,SPIMWavelength,SPIMSakabe,SPBM,SPIMYe2023v2,OguraCREST24,FocalDivision25,Circulant25}. %\LC(section21b_spim-22)
Further details can be found in the literature \mycite{PierangeliPRL,Correlation21} or in Methods. %\LC(section21b_spim-24)

\begin{lfigure}
\captionsetup[sub]{labelformat=empty}
\begin{center}

\includegraphics[ width=5in ]{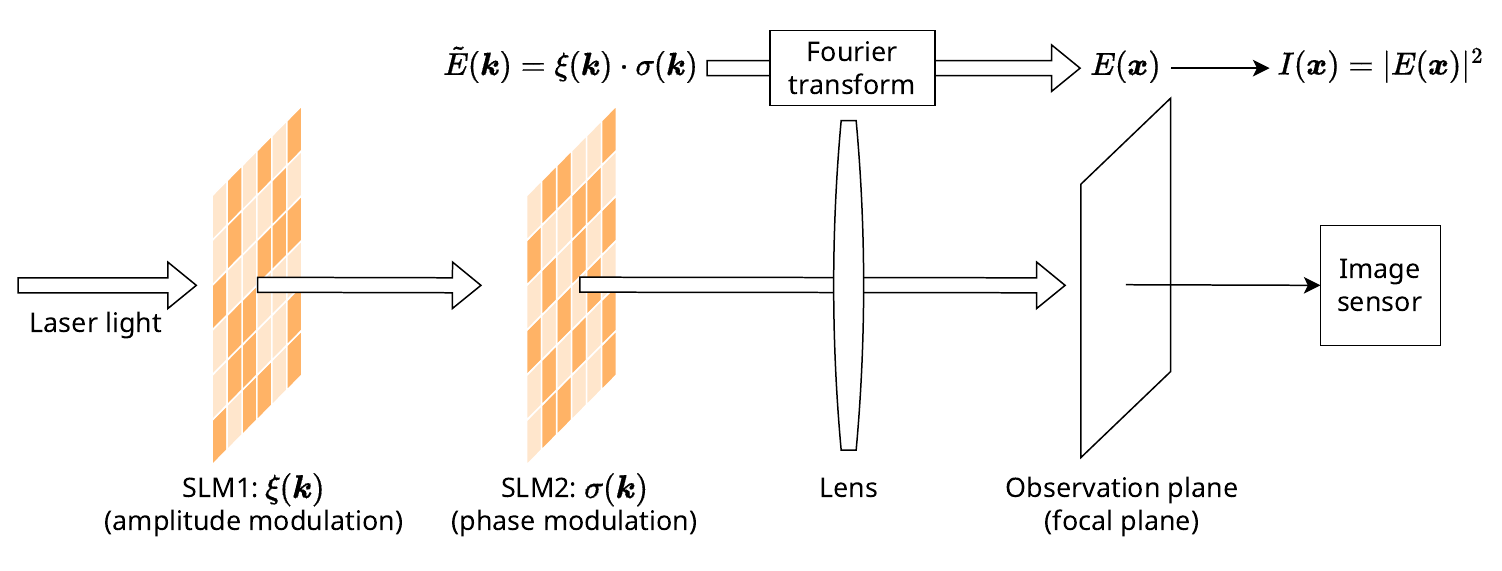}
\end{center}
\caption{
A schematic of the architecture of SPIM. %\LC(section21b_spim-31)
\addedc{
The lower part indicates the components of the SPIM system, where the white and black arrows indicate the flow of laser light and information in the system, respectively. %\LC(section21b_spim-33)
The upper part indicates the flow of computation, accompanied with the notation for the spatial pattern represented in each stage. %\LC(section21b_spim-34)
}
}
\label{fig:SLM}
\end{lfigure}

\subsection{Spatial QUBO}
\label{sec-spqubo}

To formulate spatially convolutional interactions, we associate variables in a QUBO with grid points in a $D$-dimensional integer grid $\Inte^D$. %\LC(section22b_spqubo-3)
We assume that these points are located in a finite region denoted as %\LC(section22b_spqubo-4)
\AL{
\InteN{\vlsh} \equiv \{\vpos\in\Inte^D \mid \vect{0} \le \vpos < \vlsh \},
}
where $\vlsh\in\Nnum^D$ and the inequality signs for vectors require inequality of all the elements. %\LC(section22b_spqubo-10)
Throughout this paper, we will denote vectors by bold symbols and scalars by normal symbols. An element of a vector is denoted by its symbol in a normal style with an index (e.g. $\lsh_d$ means the $d$-th element of the vector $\vlsh$). %\LC(section22b_spqubo-12)
We formulate combinatorial optimization problems with a spatially convolutional structure as spatial QUBO (spQUBO): %\LC(section22b_spqubo-14)

\begin{dfn}[spatial QUBO]
\label{dfn:spQUBO}
A ($D$-dimensional) spatial QUBO (spQUBO) is an optimization problem represented as %\LC(section22b_spqubo-19)
\AL{
\maximize{x_1,\ldots,x_N}\ \obj&=\frac{1}{2}\sumij c_ic_jf(\vpos_i-\vpos_j)x_ix_j + \sum_ib_i x_i\LN{eq:spCM}
\st\ x_i &\in \{0, 1\}\quad (i \in \viset),
}
where $f\colon\Inte^D\to\Real$ is a symmetric function satisfying $f(\vpos)=f(-\vpos)$, each $i$-th variable $x_i$ is associated with a bias $b_i\in\Real$, a coefficient $\amp_i\in\Real$, and a grid point $\vpos_i\in\InteN{\vlsh}$ for $\vlsh\in\Nnum^D$, and the mapping $i\mapsto \vpos_i$ is injective. %\LC(section22b_spqubo-26)
For an spQUBO, $f$ is called the spatial coupling function, $\vlsh$ is called its spatial shape, and the product $|\InteN{\vlsh}|=\prod_{d=1}^D\lsh_d$ is called its spatial volume. %\LC(section22b_spqubo-27)
The region $\InteN{\vlsh}$ is called its configuration domain. %\LC(section22b_spqubo-28)

\end{dfn}
Note that the symmetry condition $f(\vpos)=f(-\vpos)$ does not restrict the representation power, because from any non-symmetric spatial coupling function, we can always find a symmetric $f$ without changing the objective function $\obj$. %\LC(section22b_spqubo-32)

We will also consider the case that spatial coupling functions are periodic. %\LC(section22b_spqubo-36)
Specifically, a function $f$ on $\Inte^D$ is periodic with period $\vlsh\in\Nnum^D$ if $f(\vpos) = f(\vpos+\spos\odot\vlsh)$ holds for all $\vpos, \spos\in\Inte^D$, where $\odot$ denotes the Hadamard element-wise product. %\LC(section22b_spqubo-37)
We denote two integer vectors satisfying $\vpos'=\vpos+\spos\odot\vlsh$ as $\vpos'=\vpos \pmod{\vlsh}$. %\LC(section22b_spqubo-38)
A periodic function with period $\vlsh$ is fully determined by the values on the configuration domain $\InteN{\vlsh}$.  %\LC(section22b_spqubo-39)
For a grid point $\vpos\in \Inte^D$, we can uniquely find a corresponding point $\vpos'=\vpos \pmod{ \vlsh}$ in the configuration domain $\InteN{\vlsh}$. %\LC(section22b_spqubo-41)

\begin{dfn}[periodic spatial QUBO]
\label{dfn:periodic-spQUBO}
A ($D$-dimensional) spatial QUBO is periodic if its spatial coupling function $f$ is periodic with period $\vlsh$ identical to its spatial shape.  %\LC(section22b_spqubo-46)

\end{dfn}
An spQUBO is a QUBO with the coupling matrix $W_{ij}=c_ic_jf(\vpos_i-\vpos_j)$. %\LC(section22b_spqubo-51)
Conversely, any QUBO with coupling matrix $W_{ij}$ can be represented as an $N$-dimensional periodic spQUBO, because if we set $\vpos_i\in\{0, 1\}^{N}$ to be a one-hot vector, we can construct a function $f$ with period $\vlsh=(2, \ldots, 2)$. 
Specifically, the $i$-th grid point is defined as  %\LC(section22b_spqubo-56)
\AL{
(\vpos_i)_k &= \begin{cases}
1 & (k=i)\\
0 & (k\neq i)
\end{cases}
}
Then, in the transformed spQUBO, we set $\amp_i=\amp_j=1$ and  %\LC(section22b_spqubo-68)
\AL{
f(\vrel) &= \begin{cases} %\LC(section22b_spqubo-72)
\frac{1}{2}(W_{ij}+W_{ji}) & (\vrel=\vpos_i-\vpos_j \pmod{\vlsh})\\
0 & (\text{otherwise})
\end{cases}.
}
Because of these mutual transformations, the problem classes QUBO, spQUBO, and periodic spQUBO are all the same. %\LC(section22b_spqubo-80)

This equivalence seemingly implies that any QUBO can be implemented on an SPIM. %\LC(section22b_spqubo-84)
However, in the construction as an spQUBO, its spatial volume $2^N$ grows exponentially with the number of variables $N$. %\LC(section22b_spqubo-85)
Thus, it is crucial to find an efficient representation of a combinatorial optimization problem as an spQUBO for implementing it efficiently on an SPIM. %\LC(section22b_spqubo-87)
To that end, in this paper, we make full use of the spatial convolutional structure inherent in problems to represent them as spQUBOs with less spatial volume. %\LC(section22b_spqubo-88)

While we limit the description here to spatial QUBO, we can also define the spatial Ising problem accordingly. %\LC(section22b_spqubo-92)

\subsection{Distance-based combinatorial optimizations}
\label{sec-distance}

The convolutional structure of spQUBO can efficiently represent interaction weights based on the relative positions of spins, especially the distances between them. %\LC(section23a_placement-3)
Here, we present realistic problems with distance-based interactions as motivating examples for introducing spQUBO and demonstrating its applicability. %\LC(section23a_placement-4)

Let us consider a placement problem to determine the optimal placement of facilities. %\LC(section23a_placement-8)
We assume that the facilities can be placed on grid points on a two-dimensional region and the objective is to maximize the total utility minus the total cost of the placements. %\LC(section23a_placement-9)
Here, placements are represented by binary variables $x_i \in \{0, 1\}$, where $x_i = 1$ if a facility is placed at the $i$-th grid point and $x_i = 0$ otherwise. %\LC(section23a_placement-11)
It is natural to assume that each utility of the placement diminishes with the increase of neighboring facilities. %\LC(section23a_placement-13)
If we represent such marginal utility as a quadratic function, the maximization of the total utility becomes an spQUBO. %\LC(section23a_placement-14)
The interaction weights $W_{ij}$ are always negative, with their absolute values decreasing as the distance $\rel_{ij}$ increases. %\LC(section23a_placement-15)
These antiferromagnetic couplings, stronger for nearby grid points, prevent redundant facilities from being placed too close to one another. %\LC(section23a_placement-16)
The specific forms of the QUBO and its derivation are given in Methods. %\LC(section23a_placement-17)

Next, we consider a clustering problem to partition a set of data points into clusters. %\LC(section23b_clustering-1)
The simple-cost method \mycite{SimpleCost,DBClustering} formulates clustering as a QUBO problem to minimize the sum of the distances between data point pairs within the same cluster. %\LC(section23b_clustering-2)
If we consider two clusters, the partitioning is represented by binary variables $x_i\in\{0, 1\}$, %\LC(section23b_clustering-4)
which indicate the cluster index to which the $i$-th data point is assigned. %\LC(section23b_clustering-5)
The total distance to be minimized is given by %\LC(section23b_clustering-7)
\AL{
2\Bigg(\sumij \rel_{ij}x_ix_j\Bigg) - 2\Bigg(\sumi x_i\sumj \rel_{ij}\Bigg) + \Bigg(\sumij \rel_{ij}\Bigg),
}
which defines a QUBO with coupling matrix and bias vector given by $W_{ij} = -2\rel_{ij}$ and $b_i = \sumj \rel_{ij}$. %\LC(section23b_clustering-13)
In this formulation, the interaction weights $W_{ij}$ are always negative, %\LC(section23b_clustering-15)
with their absolute values increasing as the distance $\rel_{ij}$ increases. %\LC(section23b_clustering-16)
These antiferromagnetic couplings tend to assign distant data points to different clusters, %\LC(section23b_clustering-17)
which is a desirable property for clustering. %\LC(section23b_clustering-18)
If the data points $\vpos_i$ and distances $\rel_{ij}$ are defined on grid points in $D$-dimensional space, the problem becomes a $D$-dimensional spQUBO. %\LC(section23b_clustering-20)
This can be easily extended to the general case of more than two clusters. %\LC(section23b_clustering-22)
The partitioning is represented by binary variables $x_{i\gidx}\in\{0, 1\}$, which indicate whether the $i$-th data point is assigned to the $g$-th cluster, and are associated with the point $(\vpos_i , \gidx)$ in $(D+1)$-dimensional space. %\LC(section23b_clustering-23)
The specific forms of the QUBO and its derivation are given in Methods. %\LC(section23b_clustering-25)

Restoration of binary images can also be formulated as a QUBO \mycite{Geman}. %\LC(section23c_distance-2)
The images are represented by binary variables associated with pixels located at grid points in two-dimensional space. %\LC(section23c_distance-3)
To recover a clean image from a noisy one, where each pixel is flipped with a certain probability, the interaction weights are set to be ferromagnetic, encouraging neighboring pixels to adopt the same value. %\LC(section23c_distance-4)
Naturally, these interaction weights are determined based on the relative positions of the pixels, indicating that the QUBO is, in fact, an spQUBO. %\LC(section23c_distance-5)

In QUBO formulations of combinatorial optimization problems \mycite{LucasNP}, exact-one constraints often appear as penalty terms to ensure that exactly one binary variable in a group is set to 1. %\LC(section23c_distance-9)
For example, in the clustering problem described above, the exact-one constraint ensures that each data point is assigned to exactly one cluster. %\LC(section23c_distance-10)
In general, for a set of variables $x_i$ with indices $i\in C$, the exact-one constraint requires that exactly one variable is set to 1; i.e., $\sum_{i\in C}x_i=1$. %\LC(section23c_distance-11)
This constraint can be implemented in a QUBO by introducing a quadratic penalty term %\LC(section23c_distance-12)
$\hamil=(\sum_{i\in C}x_i-1)^2$, whose minimization enforces the constraint.
Expanding the penalty term yields $H = \sum_{i,j\in C}x_ix_j - 2\sum_{i\in C}x_i + 1$, which can be represented as an spQUBO with a constant spatial coupling function. %\LC(section23c_distance-14)
Therefore, the exact-one constraint can be naturally incorporated into spQUBO formulations, as seen in the clustering problem above. %\LC(section23c_distance-15)

\subsection{SPIM and two-dimensional spQUBO}
\label{sec-spim}

In the following, we show the equivalence between two-dimensional periodic spQUBOs and QUBOs represented by the SPIM. %\LC(section24_spqubo_derivation-3)
It should be noted that we derive the equivalence for the idealized SPIM without accounting for physical constraints such as device precision, measurement accuracy, and control of other system parameters. %\LC(section24_spqubo_derivation-4)

On the SLM plane of the idealized SPIM, pixels are assumed to be located on square grid points with a pitch of $\ell$. %\LC(section24_spqubo_derivation-8)
The physical location of pixel $\spos_j$ associated with the $j$-th spin is expressed as $\spos_j = \ell\vpos_j$, where $\vpos_j$ is an integer vector. %\LC(section24_spqubo_derivation-9)
On the observation plane as shown in Fig.~\myref{fig:SLM}, an image sensor measures intensity at pixels arranged on square grid points with a pitch of $a$. %\LC(section24_spqubo_derivation-11)
As proposed in Ref.~\myciten{Correlation21}, the inner product of the measured intensity and the reference function $\refinten$, which can take negative values, yields the Hamiltonian value. In the above setting, it can be represented as a comb-shaped function %\LC(section24_spqubo_derivation-14)
\AL{
\refinten(\vx) = \sum_{\vect{n}\in\Inte^2}\refinten^{(\vect{n})}\delta(\vx-a\vect{n}),
}
where $\refinten^{(\vect{n})}=\refinten(a\vect{n})$ is a reference array that needs to be designed depending on the applications.  %\LC(section24_spqubo_derivation-22)
The effective size of the SLM is determined by the optical setup of the SPIM, including the focal length $\flen$ of the lens and the wavelength $\lambda$ of the laser; the spin locations are bounded by $\vpos_j\in\InteN{\lsp}^2$, where $\lsp=(\flen\lambda)/(a\ell)$ is designed to be an integer. %\LC(section24_spqubo_derivation-27)
The reference array can be assumed symmetric, $\refinten^{(\vect{n})}=\refinten^{(-\vect{n})}$, as only the symmetric part contributes to the Hamiltonian. %\LC(section24_spqubo_derivation-29)
Then, the Hamiltonian of the SPIM is expressed as %\LC(section24_spqubo_derivation-30)
\AL{
H \propto \sumij\sigma_i\sigma_j\xi_i\xi_j\int d\vx\refinten(\vx)\cos\Bigg(-\frac{2\pi\ell}{\flen\lambda}(\vpos_i-\vpos_j)^\top\vx\Bigg),\LB{eq:SPIM-hamiltonian-sym} %\LC(section24_spqubo_derivation-34)
}
as derived in Methods. %\LC(section24_spqubo_derivation-37)
With the comb-shaped reference function, we have %\LC(section24_spqubo_derivation-39)
\AL{
H \propto \sumij\sigma_i\sigma_j\xi_i\xi_j\sum_{\vect{n}\in\Inte^2}\refinten^{(\vect{n})}\cos\Bigg(-\frac{2\pi }{\lsp}(\vpos_i-\vpos_j)^\top\vect{n}\Bigg). %\LC(section24_spqubo_derivation-43)
}
This is a two-dimensional periodic spQUBO with the symmetric coupling function %\LC(section24_spqubo_derivation-47)
\AL{
f(\vrel)=\sum_{\vect{n}\in\Inte^2}\refinten^{(\vect{n})}\cos\Bigg(-\frac{2\pi }{\lsp}\vrel^\top\vect{n}\Bigg), %\LC(section24_spqubo_derivation-51)
}
which is an (inverse) discrete Fourier transform (DFT) of the symmetric reference array $\refinten^{(\vect{n})}$. %\LC(section24_spqubo_derivation-55)

Conversely, let us assume a two-dimensional periodic spQUBO with spatial shape $(\lsp, \lsp)$ is given. %\LC(section24_spqubo_derivation-59)
Then, we can find the reference array $\refinten^{(\vect{n})}$ by performing the (inverse) DFT of the coupling function $f(\vrel)$. %\LC(section24_spqubo_derivation-60)
Of course, we need to set the SPIM system parameters with $\lsp$ large enough to accommodate the spin locations of the problem. %\LC(section24_spqubo_derivation-61)
Thus, when formulating a problem, minimizing the spatial volume is important for the efficient implementation on the SPIM. %\LC(section24_spqubo_derivation-62)
We assumed the case of a square spatial shape $(\lsp, \lsp)$ here for simplicity; however, the non-square case where $\lsp=(\lsp_1, \lsp_2), \lsp_1\neq \lsp_2$ can also be handled by using different system parameters for each axis. %\LC(section24_spqubo_derivation-64)

\subsection{Reduction to two-dimensional spQUBO}
\label{sec-transform}

We observed that any two-dimensional periodic spQUBO directly corresponds to an SPIM implementation. %\LC(section25_modeling-3)
We further show that any spQUBO, whether high-dimensional or non-periodic, can be rewritten as a two-dimensional periodic spQUBO, preserving the convolutional structure of the original spQUBO. %\LC(section25_modeling-4)
For simplicity, in the method described below, the coefficient for each variable is omitted by setting $\amp_i=1$. %\LC(section25_modeling-6)

\begin{lfigure}
\captionsetup[sub]{labelformat=empty}
\begin{center}

\includegraphics[ width=4in ]{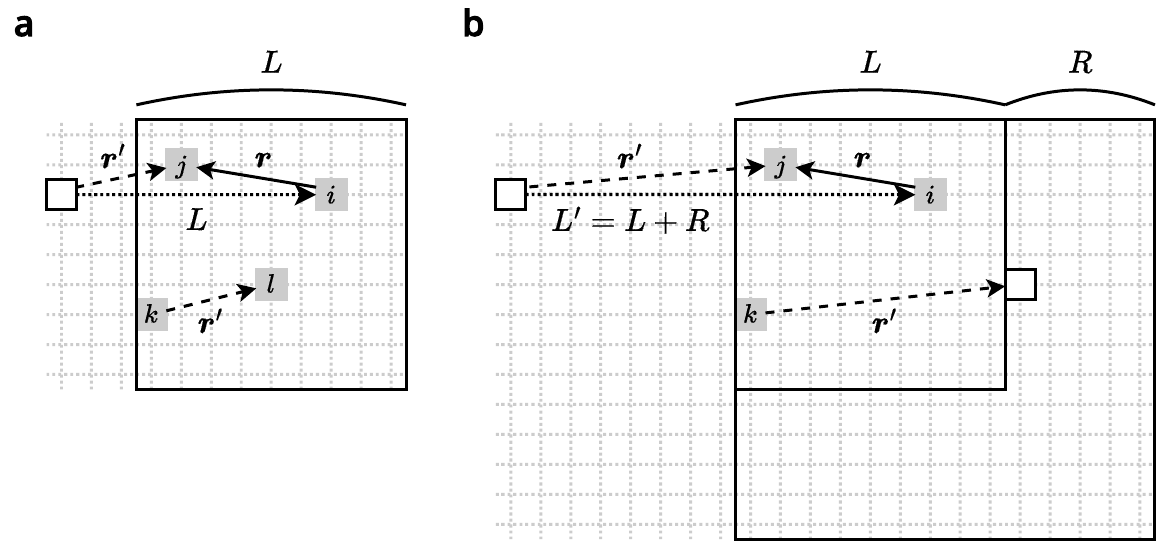}
\end{center}
\caption{
A schematic of the discussion about conflicting positions in the two-dimensional case. %\LC(section25_modeling-15)
(a) The black square of size $\vlsh=(\lsh,\lsh)$ represents the configuration domain where the spins are located. 
The gray squares represent the spin locations with their indices.  %\LC(section25_modeling-17)
For a spin pair $(x_i, x_j)$, their relative position is $\vrel\equiv\vpos_i-\vpos_j$.  %\LC(section25_modeling-18)
There exists another spin pair $(x_k, x_l)$ such that the relative position $\vrel'\equiv\vpos_k-\vpos_l$ satisfies $\vrel'=(\lsh, 0)-\vrel$, where they should have the same coefficient as $J_{kl}=J_{ij}$. %\LC(section25_modeling-19)
(b) The extended configuration domain with padding of size $\vran=(\ran, \ran)$. The inner black square of size $\vlsh$ represents the original configuration domain, and the outer square of size $\vlsh+\vran=\vlpa$ represents the expanded domain $\InteN{\vlpa}$. 
For the same spin pair $(x_i, x_j)$, the relative position is not greater than the padding size: $|\vrel|\le \vran$. If there is a spin pair whose relative position $\vrel'$ satisfies the same condition as above $\vrel'=(\lsh, 0)-\vrel$, either of the spins must fall outside the domain $\InteN{\vlsh}$. %\LC(section25_modeling-21)
Therefore, there is no such conflicting spin pair that must have the same coefficient as $J_{ij}$. %\LC(section25_modeling-22)
}
\label{fig:padding}
\end{lfigure}

First, let us examine how the periodicity condition restricts the representation power of spQUBOs. %\LC(section25_modeling-25)
Figure \myref{fig:padding}a illustrates a two-dimensional spQUBO with a square configuration domain $\vlsh=(\lsh,\lsh)$, where $\vrel$ and $\vrel'$ denote the relative position between the spin pairs $(x_i, x_j)$ and $(x_k, x_l)$, respectively. %\LC(section25_modeling-26)
It is apparent from the figure that $\vrel = \vpos_j - \vpos_i$ and $\vrel' = \vpos_l - \vpos_k$ represent different relative positions. %\LC(section25_modeling-27)
However, if $\vrel' = \vrel + (\lsh,0)$, the periodicity condition requires that $f(\vrel)=f(\vrel')$, which is often unnatural in distance-based problems. %\LC(section25_modeling-28)

This restriction can be avoided by expanding the shape of the configuration domain from $\vlsh$ to $\vlpa=\vlsh+\vran$. %\LC(section25_modeling-32)
In the expanded region $\InteN{\vlpa} \setminus \InteN{\vlsh}$, which we refer to as padding, no spins are associated. %\LC(section25_modeling-33)
Figure \myref{fig:padding}b shows spin positions that satisfy $\vrel' = \vrel + (\lsh+\ran, 0)$, for which $f(\vrel)=f(\vrel')$ is still required by the periodicity condition. %\LC(section25_modeling-34)
However, if $\vran$ is sufficiently large, such spin positions do not occur, because for any vectors $\vpos_i, \vpos_j, \vpos_k$, and $\vpos_l$ satisfying $\vrel' = \vrel \pmod{\vlsh+\vran}$, at least one must lie within the padding region $\InteN{\vlpa}\setminus \InteN{\vlsh}$, which contains no spins. %\LC(section25_modeling-35)
Thus, by introducing the padding, we can design $f$ without being restricted by the periodicity condition. %\LC(section25_modeling-36)

In general, any two-dimensional spQUBO, not necessarily periodic, can be rewritten as a two-dimensional periodic spQUBO by expanding the configuration domain with $\vran=\vlsh$. %\LC(section25_modeling-40)
Since this padding size is conservative, we consider reducing it based on the problem structure. %\LC(section25_modeling-41)
Specifically, we assume that the interaction distance is limited as %\LC(section25_modeling-42)
$|\vrel|\not\le \vran' \Rightarrow  f(\vrel)=0$, where $|\cdot|$ denotes the element-wise absolute value.
Then, we can reduce the padding size to $\vran=\vran'$, because, even if the spin positions $\vpos_i, \vpos_j, \vpos_k, \vpos_l \in \InteN{\vlsh}$ are in the above conflicting positions $\vrel' = \vrel \pmod{\vlsh+\vran}$, $f(\vrel)=f(\vrel')=0$ holds from $|\vrel|\not \le\vran$ and $|\vrel'|\not\le \vran$, and thus the periodicity condition yields no inconsistencies. %\LC(section25_modeling-45)

This result is summarized as the following theorem, in which the assumption $c_i=1$ is not needed. %\LC(section25_modeling-49)
The proof is provided in \addedc{Supplementary Note} \myref{sec-higherorder}. %\LC(section25_modeling-50)

\begin{dfn}[locality]
\label{dfn:locality-D}
A function $f: \Inte^D \to \Real$ has a locality of $\vran \in \Nnum^D$ if, for $\vrel \in \Inte^D$, $f$ satisfies  %\LC(section25_modeling-56)
\AL{
|\vrel| \not\le \vran \Rightarrow f(\vrel) = 0,
}
where $|\cdot|$ denotes the element-wise absolute value. %\LC(section25_modeling-62)

\end{dfn}

\begin{theorem}[Transformation of two-dimensional spQUBOs]
\label{thm:padding-spcm}
For a two-dimensional spQUBO with spatial shape $\vlsh$, whose spatial coupling function $f$ has a locality of $\vran < \vlsh$, there exists an equivalent two-dimensional periodic spQUBO with a spatial shape $\vlsh+\vran$. %\LC(section25_modeling-69)

\end{theorem}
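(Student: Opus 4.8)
The plan is to keep every spin at its original grid point and to replace the spatial coupling function $f$ by its periodic summation over the enlarged shape $\vlpa=\vlsh+\vran$, then verify that this summation reproduces $f$ on exactly those relative positions that can actually occur between two spins. Concretely, I define $\tilde f\colon\Inte^2\to\Real$ by
\AL{
\tilde f(\vrel)=\sum_{\vect{m}\in\Inte^2} f\!\left(\vrel-\vect{m}\odot\vlpa\right),
}
which is a finite sum because the locality hypothesis confines the support of $f$ to the box $\{\vrel:|\vrel|\le\vran\}$. Reindexing the summation shows at once that $\tilde f$ is periodic with period $\vlpa$, and using $f(\vrel)=f(-\vrel)$ shows it stays symmetric, so the candidate problem is a genuine two-dimensional periodic spQUBO of spatial shape $\vlpa$, carrying the original spins (which still lie in the enlarged domain $\InteN{\vlpa}\supseteq\InteN{\vlsh}$), the same biases $b_i$, and the same coefficients $\amp_i$.

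The heart of the argument is to show $\tilde f(\vpos_i-\vpos_j)=f(\vpos_i-\vpos_j)$ for every spin pair, which makes the coupling entries $\amp_i\amp_j f(\vpos_i-\vpos_j)$ and $\amp_i\amp_j\tilde f(\vpos_i-\vpos_j)$ — and hence the two objective functions $\obj$ — identical. Since all spins lie in $\InteN{\vlsh}$, any realizable relative position $\vrel=\vpos_i-\vpos_j$ satisfies $|r_d|\le L_d-1$ in each coordinate $d$. I then prove that every term with $\vect{m}\neq\vect{0}$ in the definition of $\tilde f(\vrel)$ vanishes: choosing a coordinate $d$ with $m_d\neq0$, so that $|m_d|\ge1$, the reverse triangle inequality gives
\AL{
\left|r_d-m_d(L_d+R_d)\right|\ge (L_d+R_d)-|r_d|\ge (L_d+R_d)-(L_d-1)=R_d+1>R_d,
}
so the shifted argument leaves the support box of $f$ in coordinate $d$, and the term is zero by locality. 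Only the $\vect{m}=\vect{0}$ term survives, giving the claim.

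The step I expect to require the most care is precisely this vanishing computation, where the attainable-offset range $|r_d|\le L_d-1$, the enlarged period $L_d+R_d$, and the locality radius $R_d$ must be balanced against one another. Geometrically it is the statement illustrated in \fig{fig:padding}: two distinct realizable offsets that become congruent modulo $\vlpa$ are forced to differ by a full period in some coordinate, which pushes both of them strictly outside the interaction range $\{|r_d|\le R_d\}$; hence the periodicity constraint $f(\vrel)=f(\vrel')$ only ever identifies values that already vanish, and no genuine coupling is corrupted by wrap-around. The hypothesis $\vran<\vlsh$ keeps the enlarged shape below $2\vlsh$, so the reduction is strictly more economical than the universal padding $\vran=\vlsh$; the coefficients $\amp_i$ never enter any of these manipulations, which is why the simplifying assumption $\amp_i=1$ used elsewhere in the section is unnecessary here. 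Assembling the three facts — periodicity, symmetry, and agreement on realizable offsets — then yields the desired equivalent two-dimensional periodic spQUBO of spatial shape $\vlpa$.
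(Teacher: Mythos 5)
Your proof is correct and rests on the same idea as the paper's: pad the shape to $\vlsh+\vran$ so that any two realizable offsets congruent modulo $\vlpa$ are forced outside the locality box $\prod_d[-R_d,R_d]$, with the decisive inequality $|r_d-m_d(L_d+R_d)|\ge R_d+1>R_d$ for $|r_d|\le L_d-1$ playing exactly the role of Propositions S2--S4 in the supplement. The only difference is packaging: the paper obtains Theorem~\myref{thm:padding-spcm} as a corollary of the general $D$-dimensional collapse (Theorem~S1), defining $f'$ piecewise via the inverse of an injective restricted map, whereas you define the same function directly as the periodic summation $\sum_{\vect{m}}f(\vrel-\vect{m}\odot\vlpa)$ and check that only the $\vect{m}=\vect{0}$ term survives on realizable offsets --- a clean, self-contained route to the two-dimensional case.
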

Note that from any spQUBO, we can always find a spatial coupling function with locality $\vlsh-\vect{1}$ without changing the objective function. %\LC(section25_modeling-73)

More generally, we show that any higher-dimensional spQUBO can be rewritten as a two-dimensional periodic spQUBO. %\LC(section25_modeling-77)

\begin{theorem}[Transformation of high-dimensional spQUBOs]
\label{thm:higherorderQUBO}
For a $D$-dimensional spQUBO with spatial shape $\vlsh$, whose spatial coupling function $f$ has a locality of $\vran < \vlsh$, there exists an equivalent two-dimensional periodic spQUBO with a spatial volume of $\ddim(\lshd+\rand)$. %\LC(section25_modeling-83)

\end{theorem}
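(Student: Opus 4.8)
The plan is to proceed in two stages: first replace the given $D$-dimensional spQUBO by an equivalent \emph{periodic} one, and then fold its $D$ axes down to two while keeping both the objective and the spatial volume $\ddim(\lshd+\rand)$ unchanged. Throughout, the only structural input I would use is the locality $\vran<\vlsh$, which forces every relative position $\vrel=\vpos_i-\vpos_j$ with $f(\vrel)\neq0$ to satisfy $|\vrel|\le\vran$.

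\emph{Periodization.} Exactly as in the padding construction preceding Theorem~\myref{thm:padding-spcm}, whose argument is dimension-agnostic, I would enlarge each axis from $\lshd$ to $\lpad=\lshd+\rand$. The margin $\lpad-\rand=\lshd$ ensures that the support $|\vrel|\le\vran$ of $f$ never wraps around to collide with itself, so $f$ extends to a function that is periodic with period $\vlpa=\vlsh+\vran$ and whose balanced residues $\delta_k$ satisfy $f\neq0\Rightarrow|\delta_k|\le\rand$ for all $k$. This produces an equivalent $D$-dimensional periodic spQUBO of spatial volume $\ddim(\lshd+\rand)$, with spins still confined to the box $\prod_{k=1}^{D}\{0,\dots,\lshd-1\}$.

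\emph{Folding.} I would encode the first $D-1$ coordinates of each spin position $\vpos_i=(p_{1,i},\dots,p_{D,i})$ in a single mixed-radix index and leave the last coordinate alone:
\AL{
u_i=\sum_{k=1}^{D-1}p_{k,i}\prod_{e=1}^{k-1}(L_e+R_e),\qquad v_i=p_{D,i},
}
assigning spin $i$ the two-dimensional position $(u_i,v_i)$ with periods $M_1=\prod_{k=1}^{D-1}(L_k+R_k)$ and $M_2=L_D+R_D$. The new configuration domain then has shape $(M_1,M_2)$ and spatial volume $M_1M_2=\ddim(\lshd+\rand)$, as required. I would define the two-dimensional coupling function $g$ by decoding: from a relative position $(u,v)$ recover the mixed-radix digits $\delta_1,\dots,\delta_{D-1}$ of $u$ together with the balanced residue $\delta_D$ of $v$ modulo $M_2$; set $g(u,v)=f(\delta_1,\dots,\delta_D)$ if every $\delta_k\in[-\rand,\rand]$, and $g(u,v)=0$ otherwise. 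By construction $g$ is periodic with period $(M_1,M_2)$.

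\emph{Correctness, and the main obstacle.} Everything reduces to the identity $g(u_i-u_j,v_i-v_j)=f(\vpos_i-\vpos_j)$ for every pair, which makes the two objectives agree term by term; establishing this carry-control is the hard part. Writing $\Delta_k=p_{k,i}-p_{k,j}\in(-L_k,L_k)$, I would argue two points. (i) \emph{Faithful decoding on the support:} if all $|\Delta_k|\le\rand$, then reading the digits from the least significant and using $2\rand<L_k+R_k$ shows each residue modulo $L_k+R_k$ is unambiguous, so the decoder returns exactly $\delta_k=\Delta_k$ and distinct in-range tuples give distinct merged differences modulo $(M_1,M_2)$; hence $g=f(\Delta_1,\dots,\Delta_D)$. (ii) \emph{No cross-collision:} if $|\Delta_{k_0}|>\rand$ at the smallest such index $k_0$, then because $|\Delta_{k_0}|<L_{k_0}=(L_{k_0}+R_{k_0})-R_{k_0}$, both $\Delta_{k_0}$ and $\Delta_{k_0}\mp(L_{k_0}+R_{k_0})$ miss $[-R_{k_0},R_{k_0}]$, so the decoded digit $\delta_{k_0}$ is out of range; the pair decodes to a non-support tuple and its merged difference can never coincide with that of an in-range pair, so setting $g=0$ is consistent with $f(\vpos_i-\vpos_j)=0$. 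The delicate point throughout---and where the padding margin $\lpad-\rand=\lshd$ is indispensable---is exactly this guarantee that an out-of-locality coordinate difference stays out of range after reduction, so that folding never forces two incompatible couplings onto the same merged offset. An equivalent route is to merge coordinates pairwise and verify that the structural relation (period $=$ spin-range bound $+$ locality radius) is preserved at each step; the carry analysis, however, remains the crux.
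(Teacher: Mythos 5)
Your proposal is correct and follows essentially the same route as the paper's proof in Supplementary Information A: a mixed-radix folding of coordinates with period $\lpad=\lshd+\rand$ per axis, injectivity of the encoding on the locality box $\prod_k[-\rand,\rand]$ guaranteed by $2\rand<\lshd+\rand$, and the no-collision property for out-of-locality differences (your point (ii) is exactly the paper's Proposition on statement (C), proved there via the set $\cfil$). The only cosmetic differences are that you fix the bipartition of axes to $\{1,\dots,D-1\}$ and $\{D\}$ rather than an arbitrary one, and you phrase the construction as periodization followed by folding rather than defining the periodic $f'$ at the end.
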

Figure~\myref{fig:mapping} illustrates how the transformation is performed. %\LC(section25_modeling-87)
Summarizing the full proof given in \addedc{Supplementary Note} \myref{sec-higherorder}, this transformation is achieved by dividing the dimension indices into two disjoint sets, $\mathcal{D}_1 \cup \mathcal{D}_2 = \{1,\ldots, D\}$, such that $\mathcal{D}_1 \cap \mathcal{D}_2 = \varnothing$. We map the grid points specified by the index sets $\mathcal{D}_1$ and $\mathcal{D}_2$ to a pair of integers $\tilde{h}(\vpos) = (\tilde{h}_{\mathcal{D}_1}(\vpos_{\mathcal{D}_1}), \tilde{h}_{\mathcal{D}_2}(\vpos_{\mathcal{D}_2}))$, which defines a new grid point for the resulting two-dimensional periodic spQUBO. %\LC(section25_modeling-88)
The mapping in each index set $\tilde{h}_{\mathcal{D}}$ maps to the interval $\InteN{\prod_{\idim\in\mathcal{D}}\lpad}$, where $\vlpa=\vlsh+\vran$. %\LC(section25_modeling-90)
Therefore, the spatial volume of the obtained spQUBO becomes $\ddim\lpad$. %\LC(section25_modeling-91)
Specifically, the $K$-dimensional coordinate vector $\vect{z} = (z_1, \ldots, z_K) = \vpos_{\mathcal{D}_1}$ is mapped to the integer value $z=(\tilde{h}(\vpos))_1$ computed as %\LC(section25_modeling-93)
\AL{
z = z_1 + z_2\lpa_1 + z_3\lpa_1\lpa_2 + \cdots + z_{K} \prod_{k=1}^{K-1} \lpa_{k}. %\LC(section25_modeling-97)
}
In particular, if the spatial shape is identical for all dimensions $\vlpa=(\lpa, \ldots, \lpa)$, the mapping is interpreted as the conversion between the integer number and its representation with a radix $\lpa$.
We can also transform the original spatial coupling function $f$ to $\tilde{f}$ by setting $\tilde{f}(\tilde{\vrel}) = f(\vpos_i-\vpos_j)$ for each spin pair, where $\tilde{\vrel}$ is computed from the spin position difference $\vpos_i-\vpos_j$ similarly to the spin position but considering the wrapping of $\mod{(\prod_{k=1}^{K} \lpa_{k})}$. %\LC(section25_modeling-105)
For the remaining cases, we set $\tilde{f}(\vrel)=0$. %\LC(section25_modeling-107)
This process for obtaining a two-dimensional periodic spQUBO enables us to implement any spQUBO on SPIM, preserving the spatially convolutional structure.

\begin{lfigure}
\captionsetup[sub]{labelformat=empty}
\begin{center}

\includegraphics[ width=6in ]{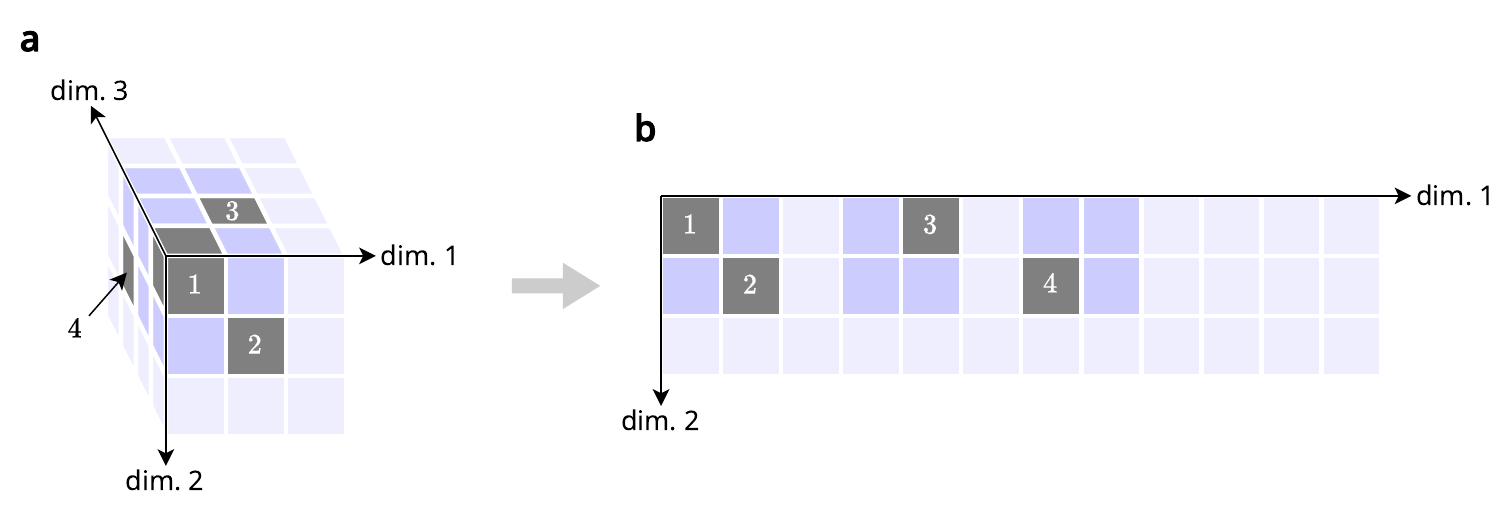}
\end{center}
\caption{
Reduction of three-dimensional spQUBO to two-dimensional. %\LC(section25_modeling-118)
(a) Example of the three-dimensional spQUBO of with $(\lsh_1, \lsh_2, \lsh_3) = (2, 2, 3)$ and $(\ran_1, \ran_2, \ran_3) = (1,1,1)$. 
The problem has 4 spins located at $\vpos_1=(0,0,0)$, $\vpos_2=(1,1,0)$, $\vpos_3=(1,0,1)$ and $\vpos_4=(0,1,2)$.  %\LC(section25_modeling-120)
We assume symmetric couplings; $W_{ij}=W_{ji}$ for all $i, j\in \viset$. %\LC(section25_modeling-121)
The darker blue area represents the original configuration domain before applying the padding. %\LC(section25_modeling-122)
(b) The two-dimensional spQUBO obtained by the proposed reduction algorithm with dimension groups $\mathcal{D}_1 = \{1, 3\}$ and $\mathcal{D}_2 = \{2\}$. 
The new spin positions for the spin at $\vpos=(d_1, d_2, d_3)$ are computed as $\tilde{\vpos}=(d_1 + d_3(\lsh_1+\ran_1), d_2)=(d_1 + 3 d_3, d_2)$, and the new spatial shape is $((\lsh_1+\ran_1)(\lsh_3+\ran_3), \lsh_2+\ran_2) = (12, 3)$. %\LC(section25_modeling-124)
The values of the new spatial coupling function $\tilde{f}$ are defined as %\LC(section25_modeling-125)
$\tilde{f}(1,1) = \tilde{f}(-1,-1) = f(1,1,0) = W_{12}$, 
$\tilde{f}(4,0) = \tilde{f}(-4,0) = f(1,0,1) = W_{13}$, 
$\tilde{f}(6,1) = \tilde{f}(-6,-1) = f(0,1,2) = W_{14}$, 
$\tilde{f}(3,-1) = \tilde{f}(-3,1) = f(0,-1,1) = W_{23}$, 
$\tilde{f}(5,0) = \tilde{f}(-5,0) = f(-1,0,2) = W_{24}$, and
$\tilde{f}(2,1) = \tilde{f}(-2,-1) = f(-1,1,1) = W_{34}$, where $f$ is the original coupling function, and they are defined periodically with period $(12, 3)$.
}
\label{fig:mapping}
\end{lfigure}

\subsection{Numerical examples of distance-based problems}

We numerically demonstrate that distance-based problems can be implemented on the (idealized) SPIM by using concrete examples of the placement problem and the clustering problem. %\LC(section26_numerical-4)
Detailed settings are provided in Methods and the full details needed to reproduce the results can be found in \addedc{Supplementary Notes} \myref{sec-placement-numerical} and \myref{sec-clustering-synthetic}, and the publicly available code repository (see Code availability). %\LC(section26_numerical-5)

\begin{lfigure}
\captionsetup[sub]{labelformat=empty}
\begin{center}

\includegraphics[ width=6.5in ]{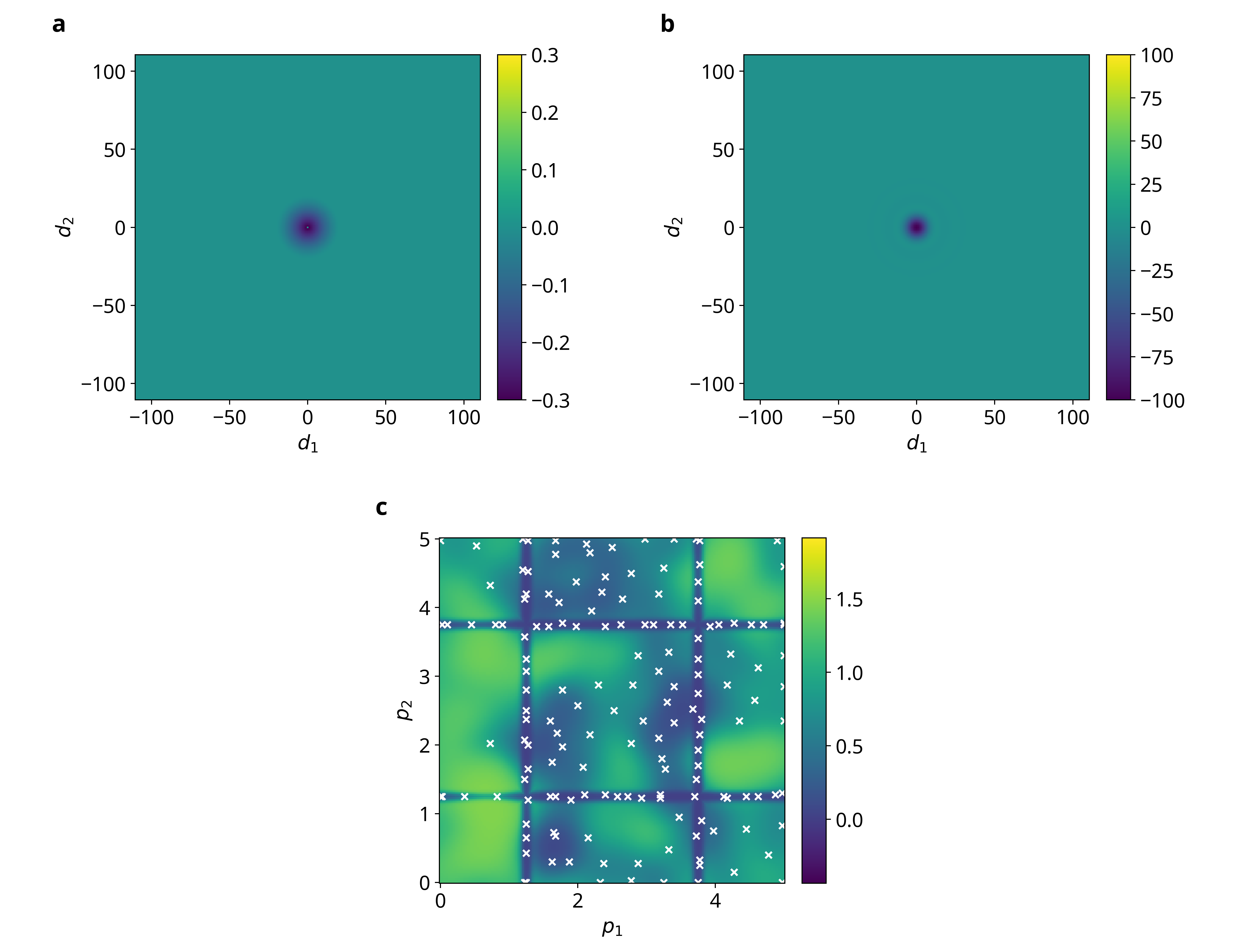}
\end{center}
\caption{
An example placement problem. %\LC(section26_numerical-14)
(a) The coupling function $f$ and (b) its discrete Fourier transform $\fou{f}$ of the example placement problem.
(c) An approximate solution obtained using momentum annealing \mycite{MA} with the exact computation of the Hamiltonian values.
The color represents the placement cost at each candidate site. %\LC(section26_numerical-18)
The white crosses represent the placed facilities in the solution. %\LC(section26_numerical-19)
The axes for (a) and (b) represent the integer coordinates in the transformed spQUBO, and those for (c) represent the coordinates in the original problem. %\LC(section26_numerical-20)
The distance-based interaction expressed in Eq.~(\ref{eq:placement_interaction_multidim}) corresponds to the center blue circle in (a) and converted to the concentric pattern in (b). %\LC(section26_numerical-21)
In the solution shown in (c), the placement is limited to the blue region but distributed within it, where the placement cost is low. %\LC(section26_numerical-22)
}
\label{fig:placement_qubo}
\end{lfigure}

First, we consider the placement problem with the coupling function and its discrete Fourier transform shown in Fig.~\myref{fig:placement_qubo}a,b. %\LC(section26_numerical-28)
This problem can be directly represented as a two-dimensional spQUBO, where the spins are located at all integer points in $\InteN{\lsh}^2$. %\LC(section26_numerical-29)
Figure \myref{fig:placement_qubo}c shows an approximate solution for the spQUBO obtained by momentum annealing \mycite{MA} on a conventional laptop computer. %\LC(section26_numerical-30)
The solution places many facilities on the points where the placement cost is low, and these facilities are appropriately spaced to avoid diminishing the utility gain. %\LC(section26_numerical-31)

\begin{lfigure}
\captionsetup[sub]{labelformat=empty}
\begin{center}

\includegraphics[ width=6.5in ]{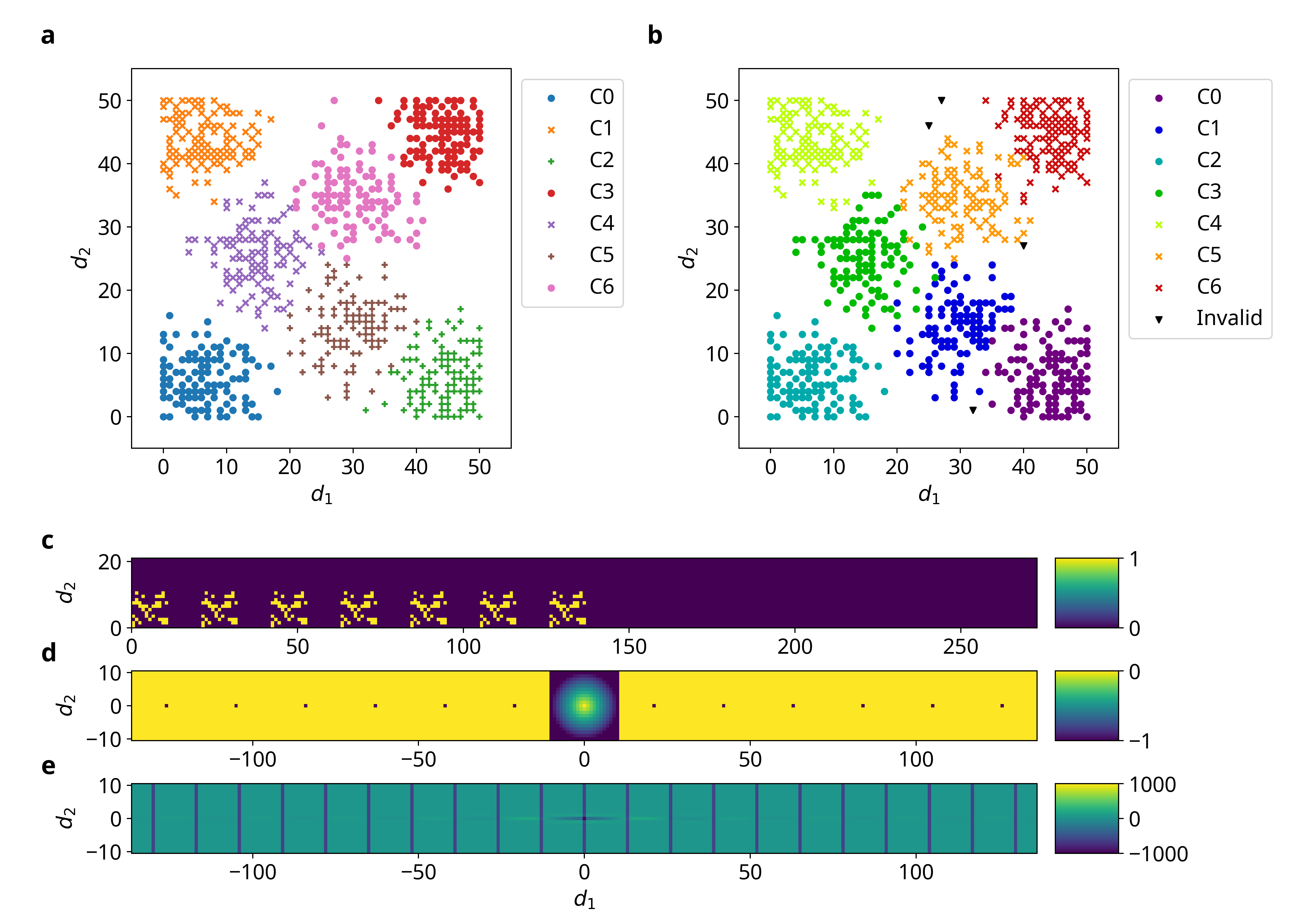}
\end{center}
\caption{
(a,b) Example of clustering problem of dimension $D=2$.
The distribution of the points and clusters (a) at generation and (b) in the approximate solution obtained using momentum annealing \mycite{MA} with the exact computation of the Hamiltonian values. %\LC(section26_numerical-41)
The clusters are represented by the shape and color of the points. %\LC(section26_numerical-42)
The spins represented by black triangles denote the spins with invalid output where the exact-one constraint is violated. %\LC(section26_numerical-43)
The axes represent the integer coordinates of the problem. %\LC(section26_numerical-44)
(c-e) The transformed two-dimensional spQUBO for an example of smaller sized problem.
The axes represent the integer coordinates of the transformed spQUBO. %\LC(section26_numerical-47)
(c) The spin arrangement, where the yellow dots indicate the locations where spins are mapped.
(d) The coupling function $f$ and (e) its discrete Fourier transform $\fou{f}$.
The axes for (a) and (b) represent the coordinates in the original problem, and those for (c-e) represent the integer coordinates in the transformed spQUBO. %\LC(section26_numerical-51)
In the panel (c), the points in the original clustering problem are copied $K$ times corresponding to the possibility of the cluster assignments. %\LC(section26_numerical-53)
The distance-based cost expressed in Eq.~(\ref{eq:clustering_HAg_multidim}) corresponds to the yellow circle at the center of (d) and the horizontal pattern in (e). %\LC(section26_numerical-54)
On the other hand, the assignment constraints expressed in Eq.~(\ref{eq:clustering_HB_multidim}) corresponds to horizontally arranged blue dots in (d) and vertical lines in (e). %\LC(section26_numerical-55)
}
\label{fig:clustering}
\end{lfigure}

Next, we consider a clustering problem with $K = 7$ clusters, where the $M$ data points are located in $\InteN{\lsh}^2$ for $\lsh = 51$, as shown in Fig.~\myref{fig:clustering}a. %\LC(section26_numerical-61)
This problem is formulated as a three-dimensional spQUBO with $MK$ spins in the configuration domain of size $\lsh^2K \approx 1.8\times 10^4$. %\LC(section26_numerical-62)
Figure \myref{fig:clustering}b shows an approximate solution for the spQUBO obtained in the same way as above, where clusters are qualitatively well identified. %\LC(section26_numerical-63)
In this process, we reduced the problem to a two-dimensional spQUBO using the proposed reduction algorithm in Theorem \myref{thm:higherorderQUBO}. %\LC(section26_numerical-66)
Figures \myref{fig:clustering}c-e represent the reduced two-dimensional spQUBO for a smaller-sized example problem.  %\LC(section26_numerical-67)

These results indicate that the Hamiltonians for the placement and clustering problems can be computed using single-shot SPIMs. %\LC(section26_numerical-78)
Since the optimization algorithms are independent of Hamiltonian computation, we can employ alternative optimization algorithms to improve the solution quality in more complex problems. %\LC(section26_numerical-79)

\subsection{Scaling and comparisons to other mapping schemes}
\label{sec-benchmark}

Different approaches have been proposed to improve the representation power of SPIMs. %\LC(section27a_beyondSPIM-12)
For example, Ref.~\myciten{Arbitrary25} represents the value of a product of interacting spins, such as $x_ix_j$ as a new spin in SPIM. %\LC(section27a_beyondSPIM-13)
Another approach, as in Ref.~\myciten{FocalDivision25}, copies each spin on the SLM as many times as the number of components in the multiplexed Mattis model. %\LC(section27a_beyondSPIM-14)
We also described a generic method to convert an arbitrary QUBO problem into an spQUBO using one-hot vectors. %\LC(section27a_beyondSPIM-16)
Among these approaches, the spQUBO formulation has an advantage in its ability to preserve the spatial structure of the problem and adapt it to the optical nature of SPIMs. %\LC(section27a_beyondSPIM-18)

To assess the benefits of exploiting the spatial structure, we compare the scaling of the required SLM size of our spQUBO approach with those of spatial multiplexing and spin-pair mapping, as shown in Table~\myref{tab-comparison}.
For a $D$-dimensional spQUBO with $N$ variables and configuration domain $\InteN{\vlsh}$, the spatial volume is $V=|\InteN{\vlsh}|=\dprod \lsh_d$. %\LC(section27a_beyondSPIM-28)
We define a packing density of the variables as the ratio of their number to the spatial volume.  %\LC(section27a_beyondSPIM-30)
We assume that the problem size scales while maintaining the packing density, denoted by $\alpha=N/V$.  %\LC(section27a_beyondSPIM-31)
We also assume it scales also while maintaining the ratio of the locality (Def. 3) to the spatial shape; that is, there exists a constant $\beta$ such that $\vran\approx\beta\vlsh$. %\LC(section27a_beyondSPIM-32)
Then the spatial volume $V$ depending on the number of variables $N$ is estimated to scale up to $(1+\beta)^D\alpha^{-1}N$. %\LC(section27a_beyondSPIM-34)

When the spatial multiplexing as in Ref.~\myciten{FocalDivision25} is applied with a Mattis-type rank-$k$ problem, the spins can be packed in each row without regard to their positions.
Hence, the required SLM size equals the number of spins $kN$. %\LC(section27a_beyondSPIM-45)
Of course, in cases such as the number partitioning problem \mycite{PierangeliPRL} and the knapsack problem \mycite{SPBM}, where $k$ is a small constant, the scaling is linear in $N$.
However, for generic full-rank problems, the spatial volume increases to $N^2$. %\LC(section27a_beyondSPIM-51)

When using the spin pair mapping \mycite{Arbitrary25}, we do not need to implement the product $x_ix_j$ for the spin pairs that have no interaction (i.e. $W_{ij}=0$). %\LC(section27a_beyondSPIM-55)
Therefore, the required SLM size, which equals the number of spins, is reduced to the number of non-zero interactions, denoted by $E$. %\LC(section27a_beyondSPIM-56)
Under the assumption of uniformly distributed spins and a small locality parameter $\beta$, the number of such non-zero elements can be estimated as $E\approx (2\beta)^DN^2/2$. %\LC(section27a_beyondSPIM-58)

For fixed parameters $\alpha$, $\beta$, and $D$, the spatial volume scales better than the other mapping schemes, as summarized in Table~\myref{tab-comparison}. %\LC(section27a_beyondSPIM-62)
We can also observe that these leading terms are governed by the locality parameter $\beta$, which reflects the range of interactions in spQUBO. %\LC(section27a_beyondSPIM-63)
As $\beta$ increases, the spatial volume of the proposed method grows more slowly than the spin pair mapping. %\LC(section27a_beyondSPIM-64)
For example, in the extreme case of $\vran=\vect{1}$ (i.e., $\beta\approx 0$), as in the image restoration problem with nearest neighbor interactions, spin pair mapping is advantageous. %\LC(section27a_beyondSPIM-65)
On the other hand, in the case of $\vran=\vlsh$ (i.e., $\beta=1$), as in the clustering problem, the proposed method becomes more efficient. %\LC(section27a_beyondSPIM-66)

In summary, spQUBO is suited to problems with spatial structure, whereas spatial multiplexing \mycite{FocalDivision25} is suitable for low-rank problems and spin-pair mapping \mycite{Arbitrary25} is suitable for sparse problems.
Table \myref{tab-comparison} compiles the spatial volume estimates discussed above, along with representative values for instances of the placement and clustering problems.  %\LC(section27a_beyondSPIM-75)
From the viewpoint of spatial volume, we see that by exploiting spatial structure the proposed method can efficiently ``pack'' spins and interactions into the SLM and the observation plane of the SPIM. %\LC(section27a_beyondSPIM-76)
It should be noted that, for the clustering problem, where sparsity is not indicated by the locality $\beta=1$, there is an additional sparsity that is not attributed to locality. %\LC(section27a_beyondSPIM-77)
As described in the Method section, we can estimate the reduced spatial volume using this sparsity as $V = E \approx (\lvert\mathcal{N}\rvert^2 K + K^2 \lvert\mathcal{N}\rvert) / 2$${}\approx 1.7\times 10^6,$ which grows faster than linearly in the number of spins and remains larger than the proposed method in Table \myref{tab-comparison}. %\LC(section27a_beyondSPIM-78)

\begin{table*}
\centering
\begin{tabular}{cccc}
& Required SLM size & Placement problem & Clustering problem \\\hline
SpQUBO & 
$\approx (1+\beta)^D\alpha^{-1}N$         & $\approx 4.8\times 10^4$ & $\approx 1.4\times 10^5$ \\
Spatial multiplexing \mycite{FocalDivision25} & 
$kN$                              &       $= 1.6 \times 10^9$ &      $= 2.4\times 10^7$ \\
Spin pair mapping \mycite{Arbitrary25} & 
$E $ & $\approx (2\beta)^D N^2 / 2 \approx 3.2\times 10^7$ &       $\approx N^2/2 = 1.2\times 10^7$ 
\end{tabular}
\caption{Comparison of the spatial volumes scaling (Column 2) between the QUBO mapping schemes. $E$ denotes the number of non-zero interactions ($W_{ij}\neq 0$) in the problem. Columns 3 and 4 show the values for example cases for the placement problem and the clustering problem: they are calculated with $N=4.0\times 10^4, D=2, \alpha=1, \beta\approx 0.1, k=N$ for the placement problem, and with $N = |\mathcal{N}|K = 4900, D=3, \alpha = 700/2500, \beta=1, k=N$ for the clustering problem. These parameter values roughly corresponds to the setups of the numerical examples shown in Figs.~\ref{fig:placement_qubo} and \ref{fig:clustering}a,b, as described in \addedc{Supplementary Notes} \myref{sec-placement-numerical} and \myref{sec-clustering-synthetic}.}
\label{tab-comparison}
\end{table*}
\subsection{DFT-based computation of spQUBO}
\label{sec-beyond}

As discussed so far, the formulation of spQUBO is motivated by efficient optical computation on the SPIM. %\LC(section27a_beyondSPIM-102)
However, the convolutional structure of spQUBO also allows efficient numerical computation using fast Fourier transforms. %\LC(section27a_beyondSPIM-103)

Specifically, the Hamiltonian of spQUBO can be computed using DFT as stated in the following theorem, whose proof is given in \addedc{Supplementary Note} \myref{sec-proof-fft}. %\LC(section27a_beyondSPIM-107)

\begin{theorem}

\label{thm:fft-hamiltonian}
Let $F$ be the Hamiltonian of a two-dimensional periodic spQUBO, without the bias term for simplicity, expressed as follows: %\LC(section27a_beyondSPIM-114)
\AL{
F = \frac{1}{2}\sumij c_ic_jf(\vdi-\vdj)x_ix_j. %\LC(section27a_beyondSPIM-118)
}
We define a function $\xi\colon\Inte^2\to\Real$ with period $\vlsh$ as %\LC(section27a_beyondSPIM-122)
\AL{
\xi(\vx) = \begin{cases}
c_ix_i & \text{if } \vx = \vdi, \\ %\LC(section27a_beyondSPIM-127)
0 & \text{otherwise}
\end{cases}
}
for $\vx\in\InteN{\vlsh}$. %\LC(section27a_beyondSPIM-133)
Then, it holds that %\LC(section27a_beyondSPIM-134)
\AL{
F = \frac{1}{2V}\sum_{\spos\in\InteN{\vlsh}}\|\fou{\xi}(\spos)\|^2\fou{f}(\spos), %\LC(section27a_beyondSPIM-138)
}
where $V=\lsh_1\lsh_2$ is the spatial volume and $\mathcal{F}$ denotes the discrete Fourier transform. %\LC(section27a_beyondSPIM-142)

\end{theorem}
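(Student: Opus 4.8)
The plan is to recognize the theorem as the discrete analogue of the convolution/Parseval identity on the periodic lattice $\InteN{\vlsh}$, so that no machinery beyond the standard DFT pair is needed. First I would rewrite the Hamiltonian as a double sum over the configuration domain rather than over the variable indices. Since $\xi$ deposits the weight $c_ix_i$ at the grid point $\vdi$ and vanishes elsewhere, the product $\xi(\vx)\xi(\vy)$ reproduces $c_ic_jx_ix_j$ precisely when $\vx=\vdi$ and $\vy=\vdj$, giving
\[ F = \frac12\sum_{\vx,\vy\in\InteN{\vlsh}} \xi(\vx)\,\xi(\vy)\, f(\vx-\vy). \]
If several variables share a grid point one simply reads $\xi(\vx)=\sum_{i:\vdi=\vx} c_ix_i$, and the identity is unchanged.

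Next I would substitute the inverse DFT of the spatial coupling function. Because the spQUBO is periodic, $f$ has period $\vlsh$ and is therefore \emph{exactly} represented by its inverse DFT on $\InteN{\vlsh}$; moreover the Fourier basis is itself $\vlsh$-periodic, so the argument $\vx-\vy$, which need not lie in $\InteN{\vlsh}$, is handled automatically and no reduction modulo $\vlsh$ has to be tracked by hand. Writing $f(\vx-\vy)=\tfrac1V\sum_{\spos}\fou{f}(\spos)\,e^{2\pi\iunit\,\spos^\top(\vx-\vy)/\vlsh}$ (with the quotient understood component-wise) and interchanging the order of summation, the factor $\fou{f}(\spos)$ pulls out of the spatial sums, which then separate as $\bigl(\sum_{\vx}\xi(\vx)e^{2\pi\iunit\spos^\top\vx/\vlsh}\bigr)\bigl(\sum_{\vy}\xi(\vy)e^{-2\pi\iunit\spos^\top\vy/\vlsh}\bigr)$.

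The final step is to identify these two factors. With the paper's DFT convention the second bracket is exactly $\fou{\xi}(\spos)$, while the first is its complex conjugate because $\xi$ is real-valued; hence their product equals $\|\fou{\xi}(\spos)\|^2$, and collecting the constants yields $F=\frac{1}{2V}\sum_{\spos\in\InteN{\vlsh}}\|\fou{\xi}(\spos)\|^2\,\fou{f}(\spos)$, as claimed. Equivalently one could route through Parseval's identity after recognizing the inner sum $\sum_{\vx}\xi(\vx)f(\vx-\vy)$ as the circular convolution $(\xi*f)(\vy)$, using $f(\vx-\vy)=f(\vy-\vx)$ from the symmetry of $f$, and then invoking $\fou{\xi*f}=\fou{\xi}\,\fou{f}$; the two routes are the same computation organized differently.

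I expect the only real obstacle to be bookkeeping rather than mathematics: keeping the normalization $1/V$ and the sign conventions of the forward and inverse transforms consistent, so that precisely one of the two separated sums yields $\fou{\xi}$ and the other its conjugate. A secondary point worth stating cleanly is that realness and symmetry of $f$ make $\fou{f}$ real, so that $F$ comes out real as a Hamiltonian should, and that periodicity of $f$ is exactly what upgrades the inverse-DFT expansion from an approximation to an exact equality.
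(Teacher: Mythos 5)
Your proof is correct and amounts to essentially the same computation as the paper's: the paper routes through a matrix--vector-product lemma (writing $F=\tfrac12\sum_i v_ia_i$ with $a=f*\xi$, then applying the convolution theorem and Parseval), while you expand $f$ directly in the Fourier basis and separate the double sum --- a reorganization you yourself correctly identify as equivalent. Your remark on coincident grid points is a small bonus; the paper simply assumes the points $\vdi$ are all distinct.
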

In many Ising solvers implemented on digital computers, the interactions between variables are computed via matrix-vector product (MVP) of the form $J\vx$, whereas the SPIM computes the Hamiltonian $\vx\tr J \vx$ directly (see \addedc{Supplementary Note}  %\LC(section27a_beyondSPIM-148)
\myref{sec-mvms-in-solver} for a brief review of MVPs appearing in Ising solvers).
The convolutional structure of $J$ in spQUBO enables efficient computation of the MVP using DFT. %\LC(section27a_beyondSPIM-151)
Specifically, for a two-dimensional periodic spQUBO, the MVP can be computed using DFT, as stated in the following theorem. %\LC(section27a_beyondSPIM-153)
The proof is given in \addedc{Supplementary Note} \myref{sec-proof-fft}. %\LC(section27a_beyondSPIM-155)

\begin{theorem}

\label{thm:fft-mvm}
Let $a_i$ be the $i$-th element of the MVP for a two-dimensional periodic spQUBO as follows: %\LC(section27a_beyondSPIM-162)
\AL{
a_i = \sumj c_ic_jf(\vdi-\vdj)x_j. %\LC(section27a_beyondSPIM-166)
}
We define a function $\xi\colon\Inte^2\to\Real$ with period $\vlsh$ as in Theorem \myref{thm:fft-hamiltonian}. %\LC(section27a_beyondSPIM-170)
Then, it holds that %\LC(section27a_beyondSPIM-171)
\AL{
a_i = c_i \ifouAS{\fou{\xi}\odot\fou{f}}(\vdi), %\LC(section27a_beyondSPIM-175)
}
where $\odot$ denotes the Hadamard element-wise product. %\LC(section27a_beyondSPIM-179)

\end{theorem}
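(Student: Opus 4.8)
The plan is to recognize the matrix-vector product as a circular convolution of $\xi$ with the spatial coupling function $f$, evaluated at the grid point $\vdi$, and then to invoke the convolution theorem for the two-dimensional DFT.

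First I would use the definition of $\xi$ to convert the sum over variables into a sum over the whole configuration domain. Since $\xi$ is supported on the grid points $\vdj$ with $\xi(\vdj)=c_jx_j$ and vanishes elsewhere,
\AL{
\sumj c_jf(\vdi-\vdj)x_j = \sum_{\vy\in\InteN{\vlsh}}\xi(\vy)f(\vdi-\vy)\equiv(\xi*f)(\vdi),
}
where $*$ denotes circular convolution with period $\vlsh$. Here the periodicity of $f$ is what makes the identification legitimate: although $\vdi-\vy$ may leave $\InteN{\vlsh}$, the periodic extension of $f$ assigns it the correct value $f(\vdi-\vy)=f((\vdi-\vy)\bmod\vlsh)$, which is exactly the convention of circular convolution. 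Multiplying by $c_i$ gives $a_i=c_i(\xi*f)(\vdi)$.

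Next I would apply the convolution theorem. With the DFT convention used throughout the paper, the substitution $\vz=\vx-\vy$ together with the periodicity of $\xi$ and $f$ factorizes the transform of the convolution:
\AL{
\fou{\xi*f}(\spos)=\fou{\xi}(\spos)\,\fou{f}(\spos)=\AS{\fou{\xi}\odot\fou{f}}(\spos).
}
Taking the inverse DFT of both sides yields $(\xi*f)(\vx)=\ifouAS{\fou{\xi}\odot\fou{f}}(\vx)$ for all $\vx\in\InteN{\vlsh}$, and evaluating at $\vx=\vdi$ gives the claim
\AL{
a_i=c_i\ifouAS{\fou{\xi}\odot\fou{f}}(\vdi).
}

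I expect the only delicate part to be the bookkeeping of periodicity and normalization, not the convolution theorem itself, which is standard. Specifically, one must check that the circular (rather than linear) convolution is the right object --- guaranteed by the periodicity of both $f$ and $\xi$ with the common period $\vlsh$ --- and that the normalizing factor $1/V$ carried by the inverse transform is precisely what makes the final identity hold with no spurious prefactor. As a consistency check, this theorem is the computational core of Theorem \myref{thm:fft-hamiltonian}: writing $F=\frac{1}{2}\sum_i x_ia_i$ and using $c_ix_i=\xi(\vdi)$ gives $F=\frac{1}{2}\sum_{\vx}\xi(\vx)(\xi*f)(\vx)$, and Parseval's identity (with $\fou{f}$ real since $f$ is real and symmetric) then recovers $F=\frac{1}{2V}\sum_{\spos}\|\fou{\xi}(\spos)\|^2\fou{f}(\spos)$.
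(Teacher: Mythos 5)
Your proof is correct and follows essentially the same route as the paper's: the paper proves the $c_i=1$ case (Proposition \myref{thm:fft-mvm-base}) by identifying $a_i$ with the cyclic convolution $(f*\xi)(\vdi)$ and invoking the DFT convolution theorem, then obtains Theorem \myref{thm:fft-mvm} as a corollary by setting $v_j=c_jx_j$ and factoring out $c_i$, exactly as you do. Your remarks on periodicity justifying the circular (rather than linear) convolution and on the $1/V$ normalization match the paper's (more terse) justification.
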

Let us estimate the computational cost of the DFT-based computation and direct computation of the Hamiltonian. %\LC(section27a_beyondSPIM-185)
Note that we ignore the cost of transforming the problem into a two-dimensional periodic spQUBO, which needs to be performed only once for an spQUBO instance. %\LC(section27a_beyondSPIM-186)

We can use the algorithm of Fast Fourier Transforms (FFT) for DFT, so the cost of the DFT-based computation can be estimated as %\LC(section27a_beyondSPIM-190)
\AL{
O(V \log V), %\LC(section27a_beyondSPIM-194)
}
where $V$ is the spatial volume that estimated in the previous section.  %\LC(section27a_beyondSPIM-198)
This scaling is advantageous over the direct computations, whose cost grows at the rate $O(N^2)$.  %\LC(section27a_beyondSPIM-200)

To demonstrate the efficiency of the DFT-based MVP computation for spQUBO, we conduct a numerical comparison between the DFT-based and direct MVP computations. %\LC(section27b_placement-1)
We solved the placement problem using momentum annealing (MA), which requires an MVP in each iteration. %\LC(section27b_placement-3)
Figure \myref{fig:fig2} shows the computation time for each problem size $N$, measured for the same number of iterations, with both methods yielding identical results. %\LC(section27b_placement-4)
The DFT-based computation shows better scaling with respect to $N$ than the direct computation. %\LC(section27b_placement-5)
The numerical experiments were conducted using Python with the NumPy package on an M1-based MacBook. %\LC(section27b_placement-6)

Optical MVP computation based on the convolutional structure of spQUBO is a promising direction for future research. %\LC(section27b_placement-10)
While this paper focuses on the SPIM architecture, such optical MVP implementations could be integrated with other Ising solvers to achieve efficient optimization of spQUBOs. %\LC(section27b_placement-11)

\begin{sfigure}
\captionsetup[sub]{labelformat=empty}
\begin{center}

\includegraphics[ width=3in ]{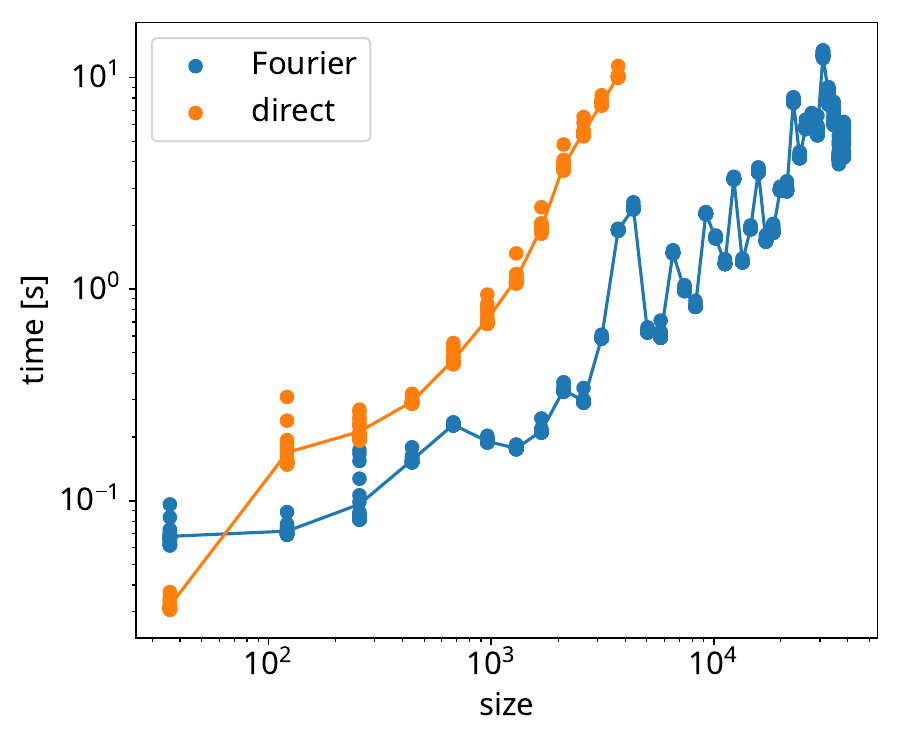}
\end{center}
\caption{
The computation time for a step of MA with FFT-based and direct MVPs for the placement problem. %\LC(section27b_placement-21)
The horizontal axis and the vertical axis represent the number of spins $N$ and the required time, respectively. %\LC(section27b_placement-22)
There are 25 runs for each $N$ and each result is represented as a point. %\LC(section27b_placement-23)
The curve represents the average of the 25 runs for each $N$. %\LC(section27b_placement-24)
The DFT-based computation shows better scaling with respect to the number of spins than the direct computation. %\LC(section27b_placement-25)
}
\label{fig:fig2}
\end{sfigure}

\section{Discussion}
\label{sec-discussion}

We have clarified the single-shot representation power of the idealized SPIM by introducing the spQUBO formulation. 
Optical implementations of SPIMs are inevitably constrained by physical limitations, such as aberration, system alignment, the quality of the coherent beam, and the photon budget required to maintain an acceptable error for the Hamiltonian computation. %\LC(section80_discussion-7)
In this study, we nevertheless idealized the system focusing on its spatial volume, which is constrained by the system's principal parameters: wavelength $\lambda$, focal length $f$, and the pitches $\ell$ and $a$ of the SLM and observation, respectively. %\LC(section80_discussion-8)
We mainly studied theoretical scaling in terms of the spatial volume and scarcely mentioned specific optical implementations, which are rapidly developing. %\LC(section80_discussion-9)
Especially, whether the physically realized SPIM system still exhibits superiority over other systems even when other aspects, such as the required energy or photon budget for solving convolutional dense problems, are taken into account remains an open issue. %\LC(section80_discussion-10)

More specifically, to evaluate effectiveness, it is important to analyze its performance and efficiency in the overall problem-solving process. %\LC(section80_discussion-14)
We should consider the desired accuracy level of Hamiltonian computation, or equivalently the resources required to achieve it, in relation to the target solution quality and the characteristics of the search algorithms that run on SPIM's Hamiltonian computation. %\LC(section80_discussion-15)
This analysis should incorporate various factors such as the energy landscape of the problem and noise schedule for simulated annealing. %\LC(section80_discussion-16)
Our theoretical results advance our understanding of the class of combinatorial optimization problems for which we can fully leverage the expected superior scalability and efficiency of SPIMs, and the proposed framework can be used as a basis of the development and the evaluation of the future optical SPIM implementations. %\LC(section80_discussion-17)
In this sense, the spQUBO formulation provides a foundational step toward enhancing the applicability of SPIMs. %\LC(section80_discussion-19)

The superiority of SPIM for the distance-based problems has also been suggested in prior work. 
For example, in the original work of SPIM \mycite{PierangeliPRL}, it was already suggested that the coupling coefficients become distance-based in general. %\LC(section80_discussion-36)
Moreover, Ref.~\myciten{Correlation21} proposes computing a correlation function to obtain such distance-based Hamiltonian value in SPIM, which is essentially identical to the way the SPIM is modeled in ths study. %\LC(section80_discussion-37)
The distance-based problems discussed in this paper can also be referred to as translation invariant problems. %\LC(section80_discussion-38)
The model whose couplings are represented by a circulant matrix is translation invariant and can likewise be implemented \mycite{Circulant25}.  %\LC(section80_discussion-39)
Moreover, low-dimensional translation invariant spin glasses can be implemented on SPIM using the correlation function method \mycite{Circulant25}, and the Fourier-mask SPIM \mycite{PhotonicSG25} uses a Fourier transformation to design the target image, similarly to our approach. %\LC(section80_discussion-40)
However, these are only for specific cases such as spin glasses in two- or three-dimensional grids that have sparse interactions in nearest or next-nearest neighbors.  %\LC(section80_discussion-41)
This study provides a generalization of these approach to arbitrary dense or high-dimensional problems, thereby offering a more comprehensive and systematic way of implementing them on SPIM and enabling their application to practical problem instances. %\LC(section80_discussion-43)

In performance evaluation of Ising solvers, especially when using large-scale problems, it is common to use problems with sparse coupling matrices, such as the maximum cut problem on sparse networks\mycite{Hamerly19,Bohm21}. %\LC(section80_discussion-52)
As already discussed, this sparsity can be used for efficiently representing QUBO problems on SPIM architecture \mycite{Arbitrary25}. %\LC(section80_discussion-53)
However, real-world problems are not necessarily sparse, and our results suggest that SPIMs also have a unique advantage when handling dense coupling matrices with convolutional structures. %\LC(section80_discussion-54)
Therefore, it is important to evaluate the problem-specific performance of Ising solvers using a wider variety of problem instances, including those with dense interactions. %\LC(section80_discussion-55)
Such coupling-wise operations can also be used for simulation of spin models with many-body interactions by Ising models by adding extra spins for each interaction \mycite{UniversalIsing16}. %\LC(section80_discussion-57)
A universality of the Ising model can be shown through this reduction \mycite{UniversalIsing16}, whereas no physical constraints are imposed, unlike those we considered for the spQUBO and the SPIM in this study. %\LC(section80_discussion-58)
Studying to what extent of this universality is preserved under these physical constraints is also an interesting theoretical direction for future research. %\LC(section80_discussion-59)

Statistical machine learning of spQUBO Hamiltonians is a promising approach to fully exploit the unique efficiency of the spQUBO formulation in handling dense interactions. %\LC(section80_discussion-63)
Several studies have investigated the use of Ising solvers for black-box optimization \mycite{BBOandDFO,BBOTwoDecades,BBOforAutomated} by formulating unknown objective functions as Ising Hamiltonians or the equivalent quadratic functions \mycite{BOCS18,Walsh19,COMEX20,QUBOmetamaterial,Mercer21,QUBOphasefield,QUBOresonance}. %\LC(section80_discussion-64)
The spQUBO formulation can be advantageous in learning dense interactions due to its small number of degrees of freedom, which is at most the spatial volume $O(V)$, whereas the direct representation of coupling matrices requires $O(N^2)$ parameters. %\LC(section80_discussion-65)
The spQUBO is also expected to be effective for black-box optimization problems with intrinsically convolutional structures such as those embedded in two- or three-dimensional physical spaces. %\LC(section80_discussion-66)
For example, to obtain an optimal design of the mounting holes of a printed circuit board, the design variables are assigned in the two-dimensional space of the circuit board \mycite{QUBOresonance}. %\LC(section80_discussion-67)
Also, for optimization of the traffic light patterns in an urban area, the design variables are assigned to traffic lights placed on a two-dimensional area, which is two-dimensional \mycite{Inoue21,Traffic24}. %\LC(section80_discussion-68)

Multiplexing the spQUBO is an interesting direction for future research. %\LC(section80_discussion-72)
Several studies have explored extending the SPIM to represent higher-rank interactions by multiplexing the SPIM in various modes \mycite{SPIMQuadrature,SPIMWavelength,SPIMSakabe,SPBM,SPIMYe2023v2,OguraCREST24,FocalDivision25,Circulant25}. %\LC(section80_discussion-73)
These multiplexing approaches can be combined with the spQUBO, leading to more efficient representations of QUBOs using multiple spQUBOs. %\LC(section80_discussion-74)

To conclude, we have clarified the class of QUBOs that can be efficiently represented by the idealized SPIM without multiplexing. %\LC(section80_discussion-78)
The proposed class, spQUBO, is capable of representing Ising Hamiltonians with dense interactions and convolutional structures. %\LC(section80_discussion-79)
Based on the spQUBO formulation, we expect further progress in both the implementation and application of the SPIM, which can exhibit superior efficiency and scalability. %\LC(section80_discussion-80)

% \appendix
\section{Methods}
\subsection{SPIM}
\label{sec-spim-appx}

Here we recapitulate the SPIM and derive its Hamiltonian based on the original paper \mycite{PierangeliPRL}, in our notation. %\LC(section9a_spim-3)
Let $i$ be the index for the elements of the SLM arranged in a two-dimensional grid; let us assume that, for each $i$, there exists an integer point $\vpos_i\in\Inte^2$ such that the coordinates of the $i$-th element are represented as $\spos_i=2\ssize \vpos_i$. %\LC(section9a_spim-5)
The SLM is configured to modulate the incoming laser light so that the electric field on the SLM plane is represented as %\LC(section9a_spim-7)
\AL{
\tilde{E}(\spos)=\sumi \xi_i \sigma_i \tilde\delta_\ssize (\spos-\spos_i),
}
where $\xi_i$ and $\sigma_i$ correspond to the amplitude and the phase on the $i$-th element, and $\tilde{\delta}_\ssize (\spos)$ is the indicator function that represents the shape of each SLM element, which is the square of size $\ssize $ centered on $(0,0)$. %\LC(section9a_spim-14)

After propagating the modulated light through the optical system with lens as shown in Fig.~\myref{fig:SLM}, we obtain the inverse Fourier transform of $\tilde{E}(\spos)$ as the electric field $E(\vect{x})$ on the observation plane. %\LC(section9a_spim-18)
In particular, with the optical setup of the focal length $\flen$ of the lens and the wavelength $\lambda$ of the laser, the obtained inverse Fourier transform $f(\vx)$ for the $i$-th element $\tilde{f}(\spos)=\tilde{\delta}_\ssize (\spos-\spos_i)$, is represented by using $\vect{y}=\spos-\spos_i$ as %\LC(section9a_spim-19)
\AL{
f(\vect{x})&=\int_{-\infty}^{\infty}d\spos \tilde{\delta}_\ssize (\spos-\spos_i)e^{(2\pi \iunit/\flen\lambda) \vect{x}\tr\spos}\NN %\LC(section9a_spim-22)
&=\int_{-\ssize }^{\ssize }dy_1\int_{-\ssize }^{\ssize }dy_2 e^{(2\pi \iunit/\flen\lambda)\vx\tr(\vy+\spos_i)}\NN
&=e^{(2\pi \iunit/\flen\lambda)\vx\tr \spos_i}\int_{-\ssize }^{\ssize }dy_1\int_{-\ssize }^{\ssize }dy_2e^{(2\pi \iunit/\flen\lambda)\vx\tr \vy}\NN
&=e^{(4\pi \iunit \ssize/\flen\lambda) \vect{x}\tr \vpos_i}\delta_\ssize (\vect{x}),
}
where $\delta_\ssize (\vect{x})$ is the inverse Fourier transform of $\tilde\delta_\ssize (\vect{k})$. %\LC(section9a_spim-29)
Therefore, from the linearity of the Fourier transform, we can represent the electric field on the observation plane as %\LC(section9a_spim-31)
\AL{
E(\vect{x})=\sumi \xi_i \sigma_i\delta_\ssize (\vect{x})e^{(4\pi \iunit \ssize/\flen\lambda)  \vect{x}\tr\vpos_i}. %\LC(section9a_spim-34)
}
We can observe its intensity $\inten(\vect{x})=|E(\vect{x})|^2=E(\vect{x})\overline{E(\vect{x})}$ by an image sensor, which can be computed as %\LC(section9a_spim-38)
\AL{
\inten(\vect{x}) = \sumij \xi_i\xi_j \sigma_i\sigma_j(\delta_\ssize (\vect{x}))^2e^{(4\pi \iunit \ssize/\flen\lambda)  \vect{x}\tr(\vpos_i-\vpos_j)},
}
where we denote the complex conjugates by $\overline{(\cdot)}$. %\LC(section9a_spim-44)

If we observe only the intensity at the origin $\inten(\vect{0})$, we obtain %\LC(section9a_spim-48)
\AL{
I(\vect{0}) = \sumij \xi_i\xi_j \sigma_i\sigma_j. %\LC(section9a_spim-51)
}
Using this observation to define the Hamiltonian as $H=-(1/2)I(\vect{0})$ yields the rank-one coupling matrix $J$ defined as $J_{ij}=\xi_i\xi_j$ mentioned in the text and other literature \mycite{SPIMQuadrature,SPIMWavelength,SPIMSakabe,SPBM,SPIMYe2023v2,OguraCREST24,FocalDivision25,Circulant25}. %\LC(section9a_spim-54)

Let us generalize the observation as proposed in Ref.~\myciten{Correlation21} by defining a reference function $\refinten(\vx)$ so that the observation is proportional to its inner product to the image: %\LC(section9a_spim-58)
\AL{
\hamil&\propto-\frac{1}{2}\int d\vect{x}\refinten(\vect{x}) \inten(\vect{x})\NN
&=-\frac{1}{2}\int d\vect{x} \refinten(\vect{x}) \sumij \xi_i\xi_j \sigma_i\sigma_j(\delta_\ssize (\vect{x}))^2e^{(4\pi \iunit \ssize/\flen\lambda)  \vect{x}\tr(\vpos_i-\vpos_j)}
}
If the SLM's element size $\ssize $ is small so that $\tilde{\delta}_\ssize (\spos)$ is close to Dirac delta function and its inverse Fourier transform is approximated as $\delta_\ssize (\vect{x})\approx 1$, we obtain the approximate equation %\LC(section9a_spim-67)
\AL{
\hamil&\propto-\frac{1}{2}\int d\vect{x} \refinten(\vect{x}) \sumij \xi_i\xi_j \sigma_i\sigma_j e^{(4\pi \iunit \ssize/\flen\lambda)  \vect{x}\tr(\vpos_i-\vpos_j)}
}
Because the pitch size in this setting is $\ell=2\ssize$, we obtain %\LC(section9a_spim-73)
\AL{
\hamil&\propto -\frac{1}{2}\int d\vect{x} \refinten(\vect{x}) \sumij \xi_i\xi_j \sigma_i\sigma_j e^{(2\pi \iunit\ell/\flen\lambda)  \vect{x}\tr(\vpos_i-\vpos_j)}.
}
Under the assumption of symmetric reference $\refinten$, we obtain the Hamiltonian in the form of Eq.~(\myref{eq:SPIM-hamiltonian-sym}). %\LC(section9a_spim-79)

As noted in Ref.~\myciten{PierangeliPRL}, minimizing $\hamil$ can also be interpreted as minimizing the difference between the functions %\LC(section9a_spim-83)
\AL{
&\int (\refinten(\vect{x})-\inten(\vect{x}))^2d\vect{x} \NN
&=\int\refinten(\vect{x})^2 d\vect{x} - 2\int \refinten(\vect{x})\inten(\vect{x})d\vect{x}+\int \inten(\vect{x})^2d\vect{x},
}
assuming $\int \refinten(\vx)^2\approx \int \inten(\vx)^2$ are constant. %\LC(section9a_spim-90)

\subsection{Formulation of placement problem as QUBO}
\label{sec-placement-derivation}

Let us consider a placement problem to determine the optimal placement of facilities on a two-dimensional plane. %\LC(section9b_placement-4)
The objective is to maximize the total utility minus the total cost of the placements. %\LC(section9b_placement-5)
Specifically, we consider a placement problem on grid points $\InteN{\lsh}^2$, where $\lsh$ is the size of the grid. %\LC(section9b_placement-6)
All the points in $\InteN{\lsh}^2$ are indexed by $i$, and $\vpos_i$ denotes the $i$-th grid point. %\LC(section9b_placement-7)
Placements are represented by binary variables $x_i \in \{0, 1\}$, %\LC(section9b_placement-8)
where $x_i = 1$ if a facility is placed at position $\vpos_i\in\InteN{\lsh}^2$, and $x_i = 0$ otherwise. %\LC(section9b_placement-9)

We assume a diminishing marginal utility, defined at each point $\vz$ as %\LC(section9b_placement-13)
\AL{
\xi_\vz = \xi(n_\vz),
}
where a quadratic function %\LC(section9b_placement-19)
\AL{
\xi(n_\vz) = -an_\vz^2 + bn_\vz, \LB{eq:placement-quadratic-utility}
}
with parameters $a, b > 0$, is applied to the number of facilities $n_\vz$ placed within a distance $R$ from $\vz$. %\LC(section9b_placement-25)

To obtain overall utility, we integrate $\xi(n_{\vz})$ over $\vz$: %\LC(section9b_placement-29)
\AL{
\Xi=\int_\vz \xi(n_{\vz}) d\vz
}
The number of facilities within a radius $R$ of $\vz$, denoted by $n_{\vz}$, is %\LC(section9b_placement-36)
\AL{
n_{\vz}=\sumi  d_i(\vz)x_i, %\LC(section9b_placement-39)
}
where $d_i(\vz)\equiv\indi(\|\vpos_i-\vz\|\le R)$ is the indicator function of the circle with radius $R$ centered at $\vpos_i$. %\LC(section9b_placement-42)
By substituting it into $\xi(n_{\vz})$, we obtain %\LC(section9b_placement-44)
\AL{
\xi(n_{\vz})=-a\sumij d_i(\vz)d_j(\vz)x_ix_j+b\sumi  d_i(\vz)x_i,
}
and, by integration, we obtain %\LC(section9b_placement-51)
\AL{
\Xi=-a\sumij W'_{ij}x_ix_j+\sumi bb'_ix_i,
}
where $W'_{ij}=\int_\vz d_i(\vz)d_j(\vz)$ and $b'_i=\int_\vz d_i(\vz)$. %\LC(section9b_placement-57)
We can compute the coefficients as %\LC(section9b_placement-59)
\AL{
W'_{ij}&=g(\rel_{ij})=\begin{cases} %\LC(section9b_placement-62)
2(R^2\theta_{ij} - \rel_{ij}^2\tan\theta_{ij}/4)&(\rel_{ij}<2R)\\
0&(\rel_{ij}\ge 2R)
\end{cases},\LN{eq:placement_interaction_multidim}
b'_i&=S, %\LC(section9b_placement-66)
}
where $S=\pi R^2$ is the area of a circle with radius $R$ and $\theta_{ij}=\cos^{-1} (\rel_{ij}/2R)$. %\LC(section9b_placement-69)

Thus, we obtain the quadratic form of $\Xi$ as %\LC(section9b_placement-73)
\AL{
\Xi&=-a\sumij g(\rel_{ij}) x_ix_j+ bS\sumi  x_i.
}

The placement cost at the $i$-th grid point is assumed to be given as $c_i$. %\LC(section9b_placement-85)
Consequently, the objective function of the placement problem is obtained by subtracting the placement costs from $\Xi$ as %\LC(section9b_placement-86)
\AL{
\obj&=\Xi-\sumi \plcost_ix_i\\
&=-a\sumij g(\rel_{ij})x_ix_j + \sumi  (bS-\plcost_i)x_i.
}
An example of the spatial coupling function of the obtained spQUBO is shown in Fig.~\myref{fig:placement_qubo}a. %\LC(section9b_placement-95)

\subsection{Formulation of clustering problem as QUBO}
\label{sec-clustering-derivation}

For the clustering problem with $K=2$ clusters, the partitioning is represented by binary variables $x_i\in\{0, 1\}$, which indicate the cluster index to which the $i$-th data point is assigned. %\LC(section9c_clustering-4)
The total distances $\hamil_g$ within the $g$-th cluster for each $g\in\{0, 1\}$ are given by %\LC(section9c_clustering-6)
\AL{
\hamil_0 &= \sumij \rel_{ij}(1-x_i)(1-x_j),\\
\hamil_1 &= \sumij \rel_{ij}x_ix_j,
}
where $\rel_{ij}$ denotes the distance between the $i$-th and $j$-th data points. %\LC(section9c_clustering-14)
The total distance to be minimized is given by %\LC(section9c_clustering-15)
\AL{
\hamil &= \hamil_0 + \hamil_1 \NN
&= 2\Bigg(\sumij \rel_{ij}x_ix_j\Bigg) - 2\Bigg(\sumi x_i\sumj \rel_{ij}\Bigg) + \Bigg(\sumij \rel_{ij}\Bigg).
}
By defining the objective function as $\obj=-\hamil/2$, we obtain a QUBO with coupling matrix and bias vector given by %\LC(section9c_clustering-22)
\AL{
W_{ij} = -2\rel_{ij}, \quad b_i = \sumj \rel_{ij}. %\LC(section9c_clustering-25)
}
Furthermore, if the data points $\vpos_i$ are located at grid points in $D$-dimensional space %\LC(section9c_clustering-28)
and the distances are given by $\rel_{ij} = \|\vpos_i-\vpos_j\|$, %\LC(section9c_clustering-29)
the clustering problem becomes a $D$-dimensional spQUBO. %\LC(section9c_clustering-30)

For the general case of $K>2$ clusters, the partitioning is represented by binary variables $x_{i\gidx}\in\{0, 1\}$, which indicate whether the $i$-th data point is assigned to the $g$-th cluster ($x_{i\gidx}=1$) or not ($x_{i\gidx}=0$). %\LC(section9c_clustering-34)
The variable $x_{i\gidx}$ is associated with the point $\tilde{\vpos}_{i\gidx} = (\vpos_i, g)$ in $(D+1)$-dimensional space, where $\vpos_i$ is the $i$-th data point in $D$-dimensional space, and $g$ is the cluster index. %\LC(section9c_clustering-35)
We define the objective function to be minimized as $\hamil=\hamil_\mathrm{A}+C\hamil_\mathrm{B}$, where %\LC(section9c_clustering-37)
\AL{
\hamil_\mathrm{A}&=\sum_{\gidx=1}^{K} \hamil_{\mathrm{A},\gidx},\\
\hamil_{\mathrm{A},\gidx}&=\sumij \rel_{ij}x_{i\gidx}x_{j\gidx},\LN{eq:clustering_HAg_multidim}
\hamil_\mathrm{B}&=\sumi\left(1-\sum_{\gidx=1}^{K}x_{i\gidx}\right)^2,\LB{eq:clustering_HB_multidim}
}
$C$ is the weight parameter and $\rel_{ij}=\|\vpos_i-\vpos_j\|$ represents the distance between the two points $\vpos_i$ and $\vpos_j$.
As in the case of $K=2$, $\hamil_{\mathrm{A},\gidx}$ represents the sum of the distances of all point pairs in cluster $g$. %\LC(section9c_clustering-47)
Thus, minimizing $\hamil_\mathrm{A}$ works for assigning the points in the neighborhood to the same cluster. %\LC(section9c_clustering-48)
On the other hand, $\hamil_\mathrm{B}$ represents the constraint that each point is assigned to exactly one cluster. %\LC(section9c_clustering-49)
By expanding $\hamil_\mathrm{A}$, we have %\LC(section9c_clustering-51)
\AL{
\hamil_\mathrm{A}&=\sum_{\gidx=1}^{K} \sumi \sumj \rel_{ij}x_{i\gidx}x_{j\gidx}\NN
&=\sumi\sumj\sum_{\gidx=1}^{K} \sum_{\gidx'=1}^{K} \rel_{ij}\delta_{\gidx, \gidx'}x_{i\gidx}x_{j\gidx'},\LB{eq:clustering_HA}
}
and, by expanding $\hamil_\mathrm{B}$, we have %\LC(section9c_clustering-58)
\AL{
\hamil_\mathrm{B}&=\sumi\left(1-\sum_{\gidx=1}^{K}x_{i\gidx}\right)^2\NN
&=\AS{\sumi\sumj\sum_{\gidx=1}^{K}\sum_{\gidx'=1}^{K}\delta_{i,j}x_{i\gidx}x_{j\gidx'}}-2\AS{\sumi\sum_{\gidx=1}^{K}x_{i\gidx}}+N.\LB{eq:clustering_HB}
}
Therefore, the problem is a QUBO to maximize $\obj=-\hamil$, where the coupling coefficient between $x_{i\gidx}$ and $x_{j\gidx'}$ is represented as %\LC(section9c_clustering-65)
\AL{
W_{i\gidx,j\gidx'}=-2(\rel_{ij}\delta_{\gidx,\gidx'}+C\delta_{i,j}). %\LC(section9c_clustering-68)
}
The coefficients are represented in the form of a spatial coupling function as %\LC(section9c_clustering-72)
\AL{
W_{i\gidx,i'\gidx'} &= f(\tilde{\vpos}_{i\gidx}-\tilde{\vpos}_{i'g'})\NN %\LC(section9c_clustering-76)
&= -2(\|\vpos_i-\vpos_{i'}\| \delta(g-g') + C\delta(\vpos_i-\vpos_{i'})), \LB{eq:clustering-J}
}
which is based on the relative positions of spins. %\LC(section9c_clustering-81)
The biases are computed as $b_{i\gidx}=2C$. %\LC(section9c_clustering-82)
The number of non-zero interactions $W_{i\gidx,i'\gidx'}$ is estimated as $(\lvert\mathcal{N}\rvert^2 K + K^2 \lvert\mathcal{N}\rvert) / 2$, where the first and second terms are counted in Eq.~(\myref{eq:clustering_HA}) and (\myref{eq:clustering_HB}), respectively. %\LC(section9c_clustering-84)
An example of the spatial coupling function of the obtained spQUBO is shown in Fig.~\myref{fig:clustering}d. %\LC(section9c_clustering-86)

\section*{Code availability}

The computer code used for the numerical examples is available at \url{https://github.com/hiroshi-yamashita/spqubo}, which has been archived in Zenodo \mycite{ZenodoSPQUBO}.

\section*{Data availability}

The data generated during the current study are available from, or can be reproduced using, the repository at \url{https://github.com/hiroshi-yamashita/spqubo}, which has been archived in Zenodo \mycite{ZenodoSPQUBO}.

% \printbibliography
\bibliography{mybib}

\section*{Acknowledgments}
This work is partially supported by JST CREST (JPMJCR18K2), by JST ALCA-Next (JPMJAN23F2), and by JST Moonshot R\&D Program (JPMJMS2021). %\LC(0-9-acknowledgments-4)

\section*{Author contributions}

H.Y. and H.S. conceived the study. H.Y. developed the formulation, performed the theoretical analysis, and performed the numerical experiments. H.Y. and H.S. discussed the results and wrote the manuscript.

\section*{Competing interests}

The authors declare no competing interests.

\end{multicols}

\newpage
% \onecolumn

% \renewcommand\thesection{\Alph{section}}
\renewcommand\thesection{\arabic{section}}
% \titleformat{\section}
%   {\normalfont\Large\bfseries}
%   {Supplementary Note \thesection:}
%   {1em}
%   {} 

\makeatletter
\renewcommand{\@seccntformat}[1]{%
  \ifcsname supp@#1@format\endcsname
    \csname supp@#1@format\endcsname
  \else
    \csname the#1\endcsname\quad
  \fi
}

\newcommand{\supp@section@format}{Supplementary Note~\thesection\quad}
\makeatother

\renewcommand\thesubsection{\arabic{section}.\arabic{subsection}}
\renewcommand{\thefigure}{S\arabic{figure}}
\renewcommand{\thetable}{S\arabic{table}}
\renewcommand{\thetheorem}{S\arabic{theorem}}
\renewcommand{\theequation}{S\arabic{equation}}

\setcounter{section}{0}
\setcounter{figure}{0}
\setcounter{table}{0}
\setcounter{theorem}{0}
\setcounter{equation}{0}

\begin{center}
    {\LARGE \bfseries Supplementary Information for ``Convolutional Formulation of Large-Scale Quadratic Unconstrained Binary Optimization with Dense Interactions'' \par}
\end{center}
\vspace{2em}

\section{Transformation of spQUBO into two-dimensional}
\label{sec-higherorder}

In this section, we prove Theorems \myref{thm:padding-spcm} and \myref{thm:higherorderQUBO} in the text. %\LC(section9da_higherorder2-3)
In particular, these can be obtained as a corollary of the following theorem that collapses the configuration domain into two-dimensional preserving the convolutional structure: %\LC(section9da_higherorder2-4)

\begin{theorem}

\label{thm:higherorder}
Let $f\colon\Inte^D\to\Real$ be a function that has a locality of $\vran < \vlsh$, and $\mathcal{D}_1, \mathcal{D}_2$ be non-overlapping index sets such that $\mathcal{D}_1\cup \mathcal{D}_2=\{1,\ldots, D\}$ and $\mathcal{D}_1\cap \mathcal{D}_2=\varnothing$. %\LC(section9da_higherorder2-10)
Let $\vect{\tilde{\lsh}}\in\Nnum^2$ be the vector whose elements are defined as %\LC(section9da_higherorder2-11)
\AL{
\tilde{\lsh}_\gamma=\prod_{\idim\in \mathcal{D}_\gamma}(\lshd+\rand)\LB{eq:lpa}
}
for $\gamma=1,2$.  %\LC(section9da_higherorder2-18)
Then, there exist a function $f'\colon\Inte^2\to\Real$ and an injective function $\ravx\colon\InteN{\vlsh} \to \InteN{\vect{\tilde{\lsh}}}$ such that $f'$ is periodic with period $\vect{\tilde{\lsh}}$ and %\LC(section9da_higherorder2-19)
\AL{
f'(\ravx(\vpos_i)-\ravx(\vpos_j))=f(\vpos_i-\vpos_j) %\LC(section9da_higherorder2-23)
}
holds for all $\vpos_i,\vpos_j\in \InteN{\vlsh}$. %\LC(section9da_higherorder2-26)

\end{theorem}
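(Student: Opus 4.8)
The plan is to realize $\ravx$ as a pair of mixed-radix encodings, one per index set, using the padded sizes $\lpad=\lshd+\rand$ (so that $\vlpa=\vlsh+\vran$) as radices, and then to show that this encoding is injective on exactly the range that matters, so that the value $f(\vpos_i-\vpos_j)$ can be read off unambiguously from the encoded difference. First I would fix an ordering of each $\mathcal{D}_\gamma$ and set the place values $w_\idim=\prod_{\idim'\in\mathcal{D}_\gamma,\ \idim'<\idim}(\lsh_{\idim'}+\ran_{\idim'})$, then define $\ravx_\gamma(\vpos)=\sum_{\idim\in\mathcal{D}_\gamma}\pos_\idim w_\idim$ and $\ravx(\vpos)=(\ravx_1(\vpos),\ravx_2(\vpos))$. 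Since $0\le\pos_\idim\le\lshd-1\le\lpad-1$, the telescoping identity $\sum_{\idim\in\mathcal{D}_\gamma}(\lpad-1)w_\idim=\tilde{\lsh}_\gamma-1$ shows $\ravx_\gamma(\vpos)\in\InteN{\tilde{\lsh}_\gamma}$, so $\ravx$ indeed lands in $\InteN{\vect{\tilde{\lsh}}}$; and by linearity $\ravx_\gamma(\vpos_i)-\ravx_\gamma(\vpos_j)=\sum_{\idim\in\mathcal{D}_\gamma}(\pos_{i,\idim}-\pos_{j,\idim})w_\idim$, so the encoded difference carries precisely the relative position $\vrel=\vpos_i-\vpos_j$ restricted to $\mathcal{D}_\gamma$.

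The crux is the following collision lemma, which I expect to be the main obstacle: if integers $t_\idim$ with $|t_\idim|\le\lpad-1$ satisfy $\sum_{\idim\in\mathcal{D}_\gamma}t_\idim w_\idim\equiv 0\pmod{\tilde{\lsh}_\gamma}$, then every $t_\idim=0$. The same telescoping bound gives $|\sum_{\idim\in\mathcal{D}_\gamma}t_\idim w_\idim|\le\tilde{\lsh}_\gamma-1$, so the congruence forces the integer sum to vanish exactly; then inspecting the highest-order term, whose weight $w_\idim$ strictly exceeds the total possible contribution $w_\idim-1$ of all lower terms (again by telescoping), forces its digit to zero, and descending induction kills the rest. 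The subtle point here is to resist the natural guess that the encoded congruence reduces to $\vrel\equiv\vrel'\pmod{\vlpa}$: this is \emph{false} in general, because the kernel of the mixed-radix map is strictly larger than $\prod_\idim\lpad\,\Inte$ whenever the radices share common factors. It is the finite-range argument, not a group isomorphism, that closes the gap.

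Finally I would define $f'$ on the fundamental domain $\InteN{\vect{\tilde{\lsh}}}$ by setting $f'(\vz)=f(\vrel)$ whenever there is an achievable $\vrel$ (i.e.\ $|\vrel|\le\vlsh-\vect{1}$ componentwise) with $f(\vrel)\neq 0$ whose encoded difference is congruent to $\vz$, and $f'(\vz)=0$ otherwise, extending $f'$ periodically with period $\vect{\tilde{\lsh}}$. For well-definedness, two competitors $\vrel,\vrel'$ both lie in the support $|\cdot|\le\vran$ (by locality, since $f\neq0$), so their difference has $|t_\idim|\le 2\rand\le\lpad-1$, where the last inequality is exactly the hypothesis $\vran<\vlsh$; the lemma then yields $\vrel=\vrel'$. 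For the target identity, take any $\vpos_i,\vpos_j\in\InteN{\vlsh}$ and put $\vrel=\vpos_i-\vpos_j$: if $\vrel$ is in the support, $f'$ returns $f(\vrel)$ by construction, while if it is not, then $f(\vrel)=0$ and any competing in-support $\vrel'$ would satisfy $|\rel'_\idim-\rel_\idim|\le\rand+(\lshd-1)=\lpad-1$, so the lemma forces $\vrel'=\vrel$, contradicting that $\vrel$ is out of support; hence $f'$ returns $0$ as well. This gives $f'(\ravx(\vpos_i)-\ravx(\vpos_j))=f(\vpos_i-\vpos_j)$ for all $\vpos_i,\vpos_j\in\InteN{\vlsh}$, completing the proof.
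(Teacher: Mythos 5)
Your proposal is correct and follows essentially the same route as the paper: the same padded mixed-radix encoding with radices $L_k+R_k$ split over $\mathcal{D}_1,\mathcal{D}_2$, the same definition of $f'$ by inverting the encoding on the support and setting it to zero elsewhere, and the same case split on whether $\vpos_i-\vpos_j$ lies within the locality range. The only organizational difference is that your single bounded-digit collision lemma (digits $|t_k|\le L_k+R_k-1$ whose weighted sum vanishes modulo $\tilde{L}_\gamma$ must all be zero) cleanly subsumes what the paper establishes as separate injectivity and image-separation statements (Propositions S2(B), S2(C), S4, S5), covering both the well-definedness of $f'$ (via $|t_k|\le 2R_k$) and the out-of-support case (via $|t_k|\le R_k+L_k-1$) with one argument.
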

\subsection{Mapping coordinate vectors to scalar values}
\label{sec-mapping}

The required map $h'$ transforms $D$-dimensional coordinate vector to a pair of scalar values. %\LC(section9da_higherorder2-31)
However, designing this can be reduced to a problem of transformation into a single scalar. %\LC(section9da_higherorder2-33)

Let us denote by $\vx_{:K}=(x_1,\ldots,x_K)$ the vector obtained by taking the first $K$ elements of a vector $\vx$. %\LC(section9da_higherorder2-37)
We introduce the following notations for $k=1,\ldots,D$: %\LC(section9da_higherorder2-38)
\AL{
\vlpa&\equiv\vlsh+\vran\LB{eq:vec-lpa-onedim}\\
\ransp^{K}&\equiv\kdim[-\rand, \rand],\LN{eq:def-ransp}
{\lpa}\lkth&\equiv\kdim\lpad,\LN{eq:def-lpa}
\ravl\lkth(\vx_{:K})&\equiv\ksum x_\idim{\lpa}^{(\idim-1)},\\
\ran\lkth&\equiv \ravl\lkth(\vran_{:K}) = \ksum\rand {\lpa}^{(\idim-1)},
}
where ${\lpa}^{(0)}=\ran^{(0)}=1$ and Eq.~(\myref{eq:def-ransp}) means the Cartesian product. %\LC(section9da_higherorder2-49)
The following recurrence relations hold for $\ravl\lkth$ and $\ran\lkth$: %\LC(section9da_higherorder2-50)
\AL{
\ravl\lkth(\vx_{:K})&= x_K {\lpa}\mkth+\ravl\mkth(\vx_{:K-1}),\LN{eq:rec-ravl}
\ran\lkth&=\ran_K{\lpa}\mkth+\ran\mkth.\LB{eq:rec-ran}
}
We also use simple notations for $K=D$ as $\ransp\equiv \ransp^D$ and $\lpa\equiv{\lpa}^{(D)}$. %\LC(section9da_higherorder2-59)
We also wrap $\ravl\dimth$ to define $\ravm\colon \Inte^D\to \lpa$ as %\LC(section9da_higherorder2-61)
\AL{
\ravm(\vect{x})=\mymod{\ravl\dimth(\vect{x})}{\lpa},\LB{eq:onedim-transformed-coordinate}
}
where, for $x\in\Inte$ and $N\in\Nnum$, $\mymod{x}{N}$ denotes the unique $y\in\InteN{N}$ that satisfies $y=x+kN$ for some $k\in\Inte$. %\LC(section9da_higherorder2-67)
The wrapped map $\ravm$ preserves the structure of relative positions of local spin pairs well, so that it can be used to collapse the configuration domain. %\LC(section9da_higherorder2-69)
Specifically, the following properties hold, whose proofs are given in the following sections: %\LC(section9da_higherorder2-70)
\

\begin{prop}

\label{thm:construction}
Let $h$ be the map obtained by restricting the domain of $g$ to $\ransp$, and $I=g(\ransp)=\{g(\vrel)\mid \vrel\in\ransp\}$ be its image.  %\LC(section9da_higherorder2-75)
Then, the following hold: %\LC(section9da_higherorder2-76)

\begin{enumerate}[label=(\Alph*)]
\item $\mymod{\ravm(\vpos_i)-\ravm(\vpos_j)}{\lpa} = \ravm(\vpos_i-\vpos_j)$ for all $\vpos_i ,\vpos_j\in \Inte^D$ 
\item $h$ is injective.
\item $\ravm(\vrel)\not\in\imset$ holds for $\vrel \in \lshsp\setminus \ransp$ where $\lshsp=\ddim (-\lshd,\lshd)$.
\end{enumerate}

\end{prop}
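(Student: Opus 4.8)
\emph{Plan.} All three parts rest on the fact that the un-wrapped map $\ravl\dimth(\vx)=\dsumk x_k{\lpa}^{(k-1)}$ is \emph{linear} in $\vx$ and that, together with the hypothesis $\vran<\vlsh$, its mixed-radix digit structure is carry-free on the relevant domains. Part (A) is then immediate: by linearity $\ravl\dimth(\vpos_i-\vpos_j)=\ravl\dimth(\vpos_i)-\ravl\dimth(\vpos_j)$, and since $\ravm=\mymod{\ravl\dimth}{\lpa}$ by Eq.~(\myref{eq:onedim-transformed-coordinate}), both $\mymod{(\ravm(\vpos_i)-\ravm(\vpos_j))}{\lpa}$ and $\ravm(\vpos_i-\vpos_j)$ are congruent to $\ravl\dimth(\vpos_i)-\ravl\dimth(\vpos_j)$ modulo $\lpa$; as both also lie in $\InteN{\lpa}$, they coincide.

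The technical core, which I would isolate as a single lemma, is a \emph{mixed-radix uniqueness} statement: if $\vect{e}\in\Inte^D$ satisfies $|e_k|\le\lpad-1$ for every $k$ and $\dsumk e_k{\lpa}^{(k-1)}\equiv0\pmod{\lpa}$, then $\vect{e}=\vect{0}$. I would prove this in two moves. First, the range estimate $\left|\dsumk e_k{\lpa}^{(k-1)}\right|\le\dsumk(\lpad-1){\lpa}^{(k-1)}=\lpa-1$ (a telescoping sum using ${\lpa}^{(k)}={\lpad}{\lpa}^{(k-1)}$) shows that the only multiple of $\lpa$ the encoding can equal is $0$, so in fact $\dsumk e_k{\lpa}^{(k-1)}=0$ exactly. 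Second, a highest-nonzero-digit argument, or equivalently an induction on $D$ via the recurrence Eq.~(\myref{eq:rec-ravl}), forces $\vect{e}=\vect{0}$: if $K$ is the largest index with $e_K\neq0$, then $|e_K|{\lpa}\mkth\ge{\lpa}\mkth$ while the remaining terms sum to at most ${\lpa}\mkth-1$ in absolute value, a contradiction.

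With the lemma in hand, (B) and (C) reduce to bookkeeping on digit sizes, and this is exactly where $\vran<\vlsh$ enters. For (B), if $\ravr(\vrel)=\ravr(\vrel')$ with $\vrel,\vrel'\in\ransp$ then, by linearity, $\dsumk(\rel_k-\rel'_k){\lpa}^{(k-1)}\equiv0\pmod{\lpa}$ with $|\rel_k-\rel'_k|\le2\rand$, and $2\rand\le\lpad-1$ precisely because $\rand<\lshd$ (integers); the lemma yields $\vrel=\vrel'$, so $\ravr$ is injective. For (C), suppose $\ravm(\vrel)\in\imset$ for some $\vrel\in\lshsp$, i.e.\ $\ravm(\vrel)=\ravm(\vrel'')$ with $\vrel''\in\ransp$; then $|\rel_k-\rel''_k|\le(\lshd-1)+\rand=\lpad-1$, the lemma again gives $\vrel=\vrel''\in\ransp$, and contrapositively every $\vrel\in\lshsp\setminus\ransp$ satisfies $\ravm(\vrel)\notin\imset$.

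I expect the main obstacle to lie entirely in the range control of the encoding, namely upgrading the congruence $\equiv0\pmod{\lpa}$ to the exact equality $=0$. The admissible digit bound differs between the two parts ($2\rand$ in (B) versus $\lshd-1+\rand$ in (C)), so the single hypothesis $\rand<\lshd$ must be shown to cover both, and the telescoping identity $\dsumk(\lpad-1){\lpa}^{(k-1)}=\lpa-1$ is the pivot that makes the attainable window width exactly $\lpa$. Everything else — linearity for (A) and the highest-digit induction for uniqueness — is routine once this estimate is established.
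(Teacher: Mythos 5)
Your proposal is correct, but it takes a genuinely different route from the paper. The paper establishes (B) and (C) through two inductive lemmas on the truncated encodings $\ravl\lkth$: one showing $2\ran\lkth+1\le{\lpa}\lkth$, and one showing that the restriction of $\ravl\lkth$ to $\ransp^K$ is injective with image in $[-\ran\lkth,\ran\lkth]$; part (C) is then handled by a separate residue-class argument that locates the first coordinate $K$ where $\vrel$ leaves $\ransp$ and shows $\ravl\lkth(\vrel_{:K})$ avoids a ``filter'' set $\cfil$ of residues modulo ${\lpa}\lkth$ that every image point must hit. You instead isolate a single mixed-radix uniqueness lemma --- if $|e_\idim|\le\lpad-1$ for all $\idim$ and $\ksum[D]{}\,e_\idim{\lpa}^{(\idim-1)}\equiv0\pmod{\lpa}$ then $\vect{e}=\vect{0}$ --- proved by the telescoping bound $\sum_{\idim}(\lpad-1){\lpa}^{(\idim-1)}=\lpa-1$ followed by a highest-nonzero-digit argument, and then derive both (B) and (C) by checking that the relevant difference vectors satisfy the digit bound: $2\rand\le\lpad-1$ for (B) and $(\lshd-1)+\rand=\lpad-1$ for (C), both consequences of $\rand\le\lshd-1$. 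Your argument is sound (I verified both digit bounds and the congruence-to-equality step), and it is more economical: one lemma covers both parts, and the filter-set construction disappears entirely. What the paper's version buys is an explicit description of the image of $h$ as a set of residues localized coordinate by coordinate, which makes the failure of membership in $\imset$ traceable to a specific dimension $K$; but since Theorem \myref{thm:higherorder} only uses Proposition \myref{thm:construction} as a black box (via the product structure $\imset=\imset_1\times\imset_2$), nothing downstream depends on that extra structure, and your proof would serve equally well.
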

We first present a simpler version of Theorem \myref{thm:higherorder} in one-dimensional case to see how the constructed map $g$ can be used. %\LC(section9da_higherorder2-94)

\begin{prop}[Theorem \ref{thm:higherorder} in one-dimensional case]
\label{thm:onedim}
For vectors $\vran < \vlsh\in\Nnum^D$, let $f\colon\Inte^D\to\Real$ be a function that has a locality of $\vran$ and let $\lpa \equiv \ddim(\lshd+\rand)$. %\LC(section9da_higherorder2-100)
Then, there exist a function $f': \Inte\to\Real$ with period $\lpa$ and an injective function $\ravx\colon \InteN{\vlsh} \to \InteN{\lpa}$ such that %\LC(section9da_higherorder2-101)
\AL{
f'(\ravx(\vpos_i)-\ravx(\vpos_j))=f(\vpos_i-\vpos_j).\LB{eq:onedim-commutativity} %\LC(section9da_higherorder2-104)
}
holds for all $\vpos_i,\vpos_j\in \InteN{\vlsh}$. %\LC(section9da_higherorder2-107)

\end{prop}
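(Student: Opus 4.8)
The plan is to obtain both $\ravx$ and $f'$ directly from the wrapped map $\ravm$ analysed in Proposition \myref{thm:construction}, letting its properties (A)--(C) carry the entire argument. First I would take $\ravx$ to be the restriction of $\ravm$ to the configuration domain, $\ravx(\vpos)=\ravm(\vpos)$ for $\vpos\in\InteN{\vlsh}$; because $\ravm$ returns a residue modulo $\lpa$, this automatically lands in $\InteN{\lpa}$, giving the required type $\InteN{\vlsh}\to\InteN{\lpa}$.

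Next I would define $f'$ on one period $\InteN{\lpa}$ and extend it with period $\lpa$. Writing $\ravr$ for the restriction of $\ravm$ to $\ransp$ and $\imset=\ravr(\ransp)$ for its image, I would set $f'(s)=f(\ravr^{-1}(s))$ for $s\in\imset$ and $f'(s)=0$ otherwise. The key point that makes this assignment legitimate is property (B): since $\ravr$ is injective, $\ravr^{-1}$ is single-valued on $\imset$, so there is no ambiguity in the first branch.

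The verification I would then carry out is the identity $f'(\ravx(\vpos_i)-\ravx(\vpos_j))=f(\vpos_i-\vpos_j)$. Setting $\vrel=\vpos_i-\vpos_j$, the coordinatewise bounds on $\vpos_i,\vpos_j$ place $\vrel$ in $\lshsp$. Property (A) rewrites $\ravx(\vpos_i)-\ravx(\vpos_j)\equiv\ravm(\vrel)\pmod{\lpa}$, and $\lpa$-periodicity collapses the left-hand side to $f'(\ravm(\vrel))$. I would then split on whether $\vrel$ lies in the locality box $\ransp$: if $\vrel\in\ransp$ then $\ravm(\vrel)=\ravr(\vrel)\in\imset$ and the definition returns $f(\vrel)$; if $\vrel\in\lshsp\setminus\ransp$ then property (C) puts $\ravm(\vrel)$ outside $\imset$, so $f'(\ravm(\vrel))=0$, which matches $f(\vrel)=0$ coming from the locality hypothesis.

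The hard part is not any single calculation but guaranteeing global consistency of $f'$: many distinct relative positions can share a residue modulo $\lpa$, and each shared residue would impose an equality among the corresponding values of $f$. Properties (B) and (C) are precisely engineered to prevent this --- (B) rules out collisions inside $\ransp$, while (C) ensures every ``spurious'' difference in $\lshsp\setminus\ransp$ is sent outside $\imset$, where the zero branch of $f'$ agrees with the vanishing of $f$ guaranteed by locality. Once Proposition \myref{thm:construction} is in hand, this one-dimensional reduction follows with no further combinatorial work.
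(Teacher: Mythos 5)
Your proposal is correct and follows essentially the same route as the paper's proof: define $f'$ via $f\circ\ravr^{-1}$ on the image $\imset$ (zero elsewhere, extended $\lpa$-periodically), take $\ravx$ as the restriction of $\ravm$, and verify the identity by the same case split on $\vrel\in\ransp$ versus $\vrel\in\lshsp\setminus\ransp$ using properties (A), (B), and (C) of Proposition \myref{thm:construction} together with the locality of $f$.
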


\begin{proof}

Let us define $f'\colon \Inte\to\Real$ as %\LC(section9da_higherorder2-111)
\AL{
f'(\rel)= %\LC(section9da_higherorder2-114)
\begin{cases}
f(\ravr^{-1}(\mymod{\rel}{\lpa}))& %\LC(section9da_higherorder2-116)
(\mymod{\rel}{\lpa} \in \imset)\\
0&
(\mymod{\rel}{\lpa} \not \in \imset)
\end{cases},\LB{eq:onedim-transformed-coupling}
}
where $h$ and $I$ are defined as Proposition \myref{thm:construction}.  %\LC(section9da_higherorder2-123)
This map is well-defined because $h$ is injective ((B) of Proposition \myref{thm:construction}). %\LC(section9da_higherorder2-124)

Let $\ravx$ be the injective map obtained by restricting $\ravm$ to $\InteN{\vlsh}$. %\LC(section9da_higherorder2-128)
Then, for any $\vpos_i,\vpos_j\in\InteN{\vlsh}$, it holds %\LC(section9da_higherorder2-129)
\AL{
\mymod{\ravx(\vpos_i)-\ravx(\vpos_j)}{\lpa}=g(\vpos_i-\vpos_j)
}
by (A) of Proposition \myref{thm:construction}. %\LC(section9da_higherorder2-135)
When $\vpos_i-\vpos_j \in \ransp$, it holds $g(\vpos_i-\vpos_j)\in I$. %\LC(section9da_higherorder2-137)
Then, by calculating from the definition of $f'$, Eq.~(\myref{eq:onedim-commutativity}) holds as %\LC(section9da_higherorder2-138)
\AL{
&f'(\ravx(\vpos_i)-\ravx(\vpos_j))\NN
&=f(\ravr^{-1}(\mymod{\ravx(\vpos_i)-\ravx(\vpos_j)}{\lpa}))\NN
&=f(\ravr^{-1}(g(\vpos_i-\vpos_j)))\NN
&=f(\vpos_i-\vpos_j).
}
When $\vpos_i-\vpos_j \in \lshsp\setminus\ransp$, it holds $g(\vpos_i-\vpos_j)\not\in\imset$ by (C) of Proposition \myref{thm:construction}. %\LC(section9da_higherorder2-148)
Then, we have $f'(\ravx(\vpos_i)-\ravx(\vpos_j))=0$ by definition. %\LC(section9da_higherorder2-149)
Then, because $f(\vpos_i-\vpos_j)=0$ holds from the locality of $f$, we obtain Eq.~(\myref{eq:onedim-commutativity}). %\LC(section9da_higherorder2-150)

\end{proof}
The proof can be extended to the proof of two-dimensional case as follows: %\LC(section9da_higherorder2-156)

\begin{proof}
[Proof of Theorem \ref{thm:higherorder}]
For $D$-dimensional vector $\vect{x}$ and $\gamma\in\{1,2\}$, let us denote by $\vect{x}\grp{\gamma}$ the vector consisting of the elements $x_\idim$ for $\idim\in \mathcal{D}_\gamma$. %\LC(section9da_higherorder2-160)
Let $\ravm_\gamma$ and $\imset_\gamma$ be $\ravm$ and $\imset$ of Proposition \myref{thm:construction} when $\vlsh$ and $\vran$ are $\vlsh_\gamma$ and $\vran_\gamma$, respectively. %\LC(section9da_higherorder2-161)

Let us combine $\ravm_\gamma$ to define $\ravm(\vpos)=(\ravm_1(\vpos\grp{1}),\ravm_2(\vpos\grp{2}))$. %\LC(section9da_higherorder2-165)
In addition, let $h$ be the map obtained by restricting the domain of $g$ to $\ransp$, and $I=g(\ransp)=\{g(\vrel)\mid \vrel\in\ransp\}$ be its image. %\LC(section9da_higherorder2-166)
For $\vrel\in\lshsp\setminus\ransp$, we obtain $\ravm_1(\vrel\grp{1})\not\in \imset_1$ or $\ravm_2(\vrel\grp{2})\not\in \imset_2$ from Proposition \myref{thm:construction} and thus $g(\vrel)\not\in\imset$ because $\imset=\imset_1\times\imset_2$. %\LC(section9da_higherorder2-168)
Therefore, $\ravm$ and $\ravr$ satisfies three conditions (A), (B), (C) in Proposition \myref{thm:construction} where $\lpa$ is replaced by $\vect{\tilde{\lsh}}$, because (A) and (B) are obvious from the constructions of $g$ and $h$. %\LC(section9da_higherorder2-170)

We can complete the proof by following the argument of Proposition \myref{thm:onedim}, using these $g$ and $h$ and substituting $\vect{\tilde{\lsh}}$ for $\lpa$. %\LC(section9da_higherorder2-174)

\end{proof}

\subsection{Proof of Proposition \ref{thm:construction}}
\label{sec-proofA}

\begin{lfigure}
\captionsetup[sub]{labelformat=empty}
\begin{center}

\includegraphics[ width=5.7in ]{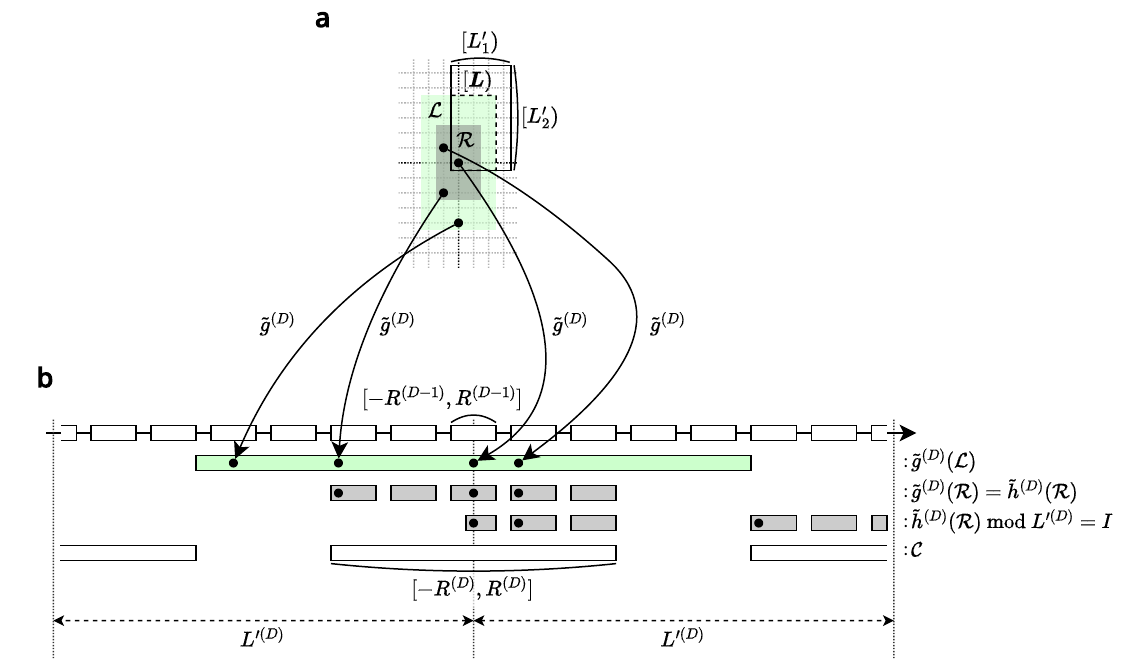}
\end{center}
\caption{
An example of maps used in the transformation of the configuration domain and in the proof of Proposition \myref{thm:construction}.  %\LC(section9db_proof-10)
(a) a configuration domain for $D=2$. 
The parameters are $\vlsh=(3,5)$, $\vran=(1, 2)$ and $\vlpa=(4,7)$.  %\LC(section9db_proof-12)
(b) The images of the maps at the last step of induction $K=D=2$. 
The correspondences in each map for $\vx=(0, -4)\in\mathcal{\lsh}$ and $\vx=(-1,-2),(0,0),(-1,1)\in\mathcal{\ran}$ are also shown. %\LC(section9db_proof-14)
}
\label{fig-proofs}
\end{lfigure}

Here we prove Proposition \myref{thm:construction}. %\LC(section9db_proof-18)
Figure \myref{fig-proofs} shows an example of the maps and their images used in the proof. %\LC(section9db_proof-19)

To prove Proposition \myref{thm:construction}, we use the following facts: %\LC(section9db_proof-23)

\begin{prop}

\label{prop:propB1}
It holds %\LC(section9db_proof-29)
\AL{
2\ran\lkth+1\le {\lpa}\lkth.\LB{eq:propB1}
}

\end{prop}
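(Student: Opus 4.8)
The plan is to prove \myref{eq:propB1} by induction on $K$, using only the per-coordinate strict inequality $\rand<\lshd$ that is packaged into the hypothesis $\vran<\vlsh$. At the outset I would record its elementary integer consequence
$$\lpad=\lshd+\rand\ge 2\rand+1,$$
which is the single estimate driving every step. For the base case $K=1$ the definitions give $\ran^{(1)}=\ran_1$ and ${\lpa}^{(1)}=\lpa_1=\lsh_1+\ran_1$, so the claimed bound $2\ran_1+1\le\lsh_1+\ran_1$ is exactly $\ran_1<\lsh_1$, which holds by assumption.

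For the inductive step I would invoke the recurrence \myref{eq:rec-ran}, namely $\ran\lkth=\rand{\lpa}\mkth+\ran\mkth$, together with the product form ${\lpa}\lkth=\lpad{\lpa}\mkth$ coming from \myref{eq:def-lpa}. Substituting and applying the inductive hypothesis $2\ran\mkth+1\le{\lpa}\mkth$ then gives
\AN{
2\ran\lkth+1&=2\rand{\lpa}\mkth+\left(2\ran\mkth+1\right)\\
&\le\left(2\rand+1\right){\lpa}\mkth\le\left(\lshd+\rand\right){\lpa}\mkth={\lpa}\lkth,
}
where the first inequality is the inductive hypothesis, the second is the per-coordinate bound $2\rand+1\le\lshd+\rand$ multiplied by the positive factor ${\lpa}\mkth$, and the final equality is $\lpad{\lpa}\mkth={\lpa}\lkth$. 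This closes the induction.

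I do not expect a genuine obstacle: \myref{eq:propB1} is a routine mixed-radix estimate, and the only point needing care is that the recurrence \myref{eq:rec-ran} reproduces the summation definition of $\ran\lkth$ only for $K\ge 2$, so I would treat $K=1$ separately as the base case, exactly as above, and apply the recurrence only for $K\ge 2$. As an independent sanity check, one may read $\ran\lkth=\ksum\rand{\lpa}^{(\idim-1)}$ as the integer with digit $\rand$ in position $\idim$ under radices $\lpad$; its complement $\sum_{\idim=1}^{K}(\lshd-1){\lpa}^{(\idim-1)}={\lpa}\lkth-1-\ran\lkth$ dominates it digit-by-digit because $\rand\le\lshd-1$, which immediately yields $2\ran\lkth\le{\lpa}\lkth-1$, i.e.\ \myref{eq:propB1}.
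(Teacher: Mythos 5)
Your proof is correct and follows essentially the same route as the paper's: the per-coordinate bound $2\rand+1\le\lpad$ (from $\rand\le\lshd-1$), a base case at $K=1$, and an inductive step combining the recurrence (\myref{eq:rec-ran}) with the inductive hypothesis and the factorization ${\lpa}\lkth=\lpa_K{\lpa}\mkth$. The digit-complement remark at the end is a nice alternative viewpoint but is not needed; the main argument already matches the paper's proof line for line.
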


\begin{proof}

For any $\idim$, it holds %\LC(section9db_proof-37)
\AL{
2\rand+1\le \lpad, \LB{eq:propB1-k1}
}
because $\lpad=\lshd+\rand$ and $\rand\le \lshd-1$ by definition. %\LC(section9db_proof-43)

The inequality Eq.~(\myref{eq:propB1}) for $K=1$ holds by Eq.~(\myref{eq:propB1-k1}). %\LC(section9db_proof-47)
For $K>1$, we can inductively prove it by assuming it for $K-1$ and using the recurrence relation (\myref{eq:rec-ran}) as %\LC(section9db_proof-48)
\AL{
2\ran\lkth+1&=2\ran_K {\lpa}\mkth+2\ran\mkth+1\NN
&\le 2\ran_K {\lpa}\mkth+{\lpa}\mkth\NN
&=(2\ran_K+1){\lpa}\mkth\NN
&\le \lpa_K {\lpa}\mkth\NN
&={\lpa}\lkth.\LB{eq:propB1-5}
}

\end{proof}

\begin{prop}

\label{prop:propB2}
Let $\ravlr\lkth$ be the map obtained by restricting the domain of $\ravl\lkth$ to $\ransp^K$. %\LC(section9db_proof-68)
Then, $\ravlr\lkth$ is injective and its image is contained in $[-\ran\lkth, \ran\lkth]$. %\LC(section9db_proof-69)

\end{prop}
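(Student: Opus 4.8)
The plan is to prove both assertions simultaneously by induction on $K$, exploiting the mixed-radix structure of $\ravl\lkth$ recorded in the recurrences (\myref{eq:rec-ravl}) and (\myref{eq:rec-ran}). The containment of the image in $[-\ran\lkth,\ran\lkth]$ is the easy half and needs no induction: for any $\vx\in\ransp^K$ each coordinate obeys $|x_\idim|\le\rand$, so the triangle inequality gives $|\ravl\lkth(\vx_{:K})|\le\ksum\rand{\lpa}^{(\idim-1)}=\ran\lkth$. The real content is the injectivity of the restriction $\ravlr\lkth$.

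For the base case $K=1$ I would simply note that ${\lpa}^{(0)}=1$ makes $\ravl^{(1)}(x_1)=x_1$, so the restriction to $\ransp^1$ is the identity and hence injective. For the inductive step I would assume the proposition at level $K-1$ and take $\vx,\vy\in\ransp^K$ with $\ravl\lkth(\vx_{:K})=\ravl\lkth(\vy_{:K})$. Substituting the recurrence (\myref{eq:rec-ravl}) and rearranging isolates the leading digit:
\AL{
(x_K-y_K){\lpa}\mkth=\ravl\mkth(\vy_{:K-1})-\ravl\mkth(\vx_{:K-1}).
}
By the induction hypothesis the two terms on the right both lie in $[-\ran\mkth,\ran\mkth]$, so the right-hand side is bounded in absolute value by $2\ran\mkth$.

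The crux---and the only point at which the locality assumption $\vran<\vlsh$ is used---is the extraction of the top digit. Here I would invoke Proposition~\myref{prop:propB1} at level $K-1$, giving $2\ran\mkth+1\le{\lpa}\mkth$ and hence $2\ran\mkth<{\lpa}\mkth$. Together with the displayed identity this yields $|x_K-y_K|\,{\lpa}\mkth<{\lpa}\mkth$, so $|x_K-y_K|<1$; since $x_K,y_K\in\Inte$ this forces $x_K=y_K$. The identity then collapses to $\ravl\mkth(\vx_{:K-1})=\ravl\mkth(\vy_{:K-1})$, and injectivity of $\ravlr\mkth$ from the induction hypothesis gives $\vx_{:K-1}=\vy_{:K-1}$, so $\vx=\vy$.

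I expect this digit-extraction step to be the one genuine obstacle; it is exactly the uniqueness-of-positional-representation argument, and the role of Proposition~\myref{prop:propB1} is precisely to guarantee that the accumulated contribution of the lower coordinates can never reach the place value ${\lpa}\mkth$ of the next coordinate, which is what prevents distinct digit strings from colliding. The remaining parts---the range bound and the bookkeeping with the recurrences---are routine.
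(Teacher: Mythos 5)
Your proof is correct and follows essentially the same route as the paper: induction on $K$ via the recurrences (\myref{eq:rec-ravl}) and (\myref{eq:rec-ran}), with Proposition~\myref{prop:propB1} supplying the key inequality $2\ran\mkth<{\lpa}\mkth$ that separates distinct top digits (the paper phrases this as disjointness of the intervals $[x_K{\lpa}\mkth-\ran\mkth,\,x_K{\lpa}\mkth+\ran\mkth]$ for different $x_K$, which is the same cancellation argument you give). The only cosmetic difference is that you obtain the range bound directly from the triangle inequality rather than inductively, which is a harmless simplification.
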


\begin{proof}

We prove it by induction. %\LC(section9db_proof-73)
When $K=1$, $\ravlr^{(1)}$ is injective and its image is $\ransp^1=[-\ran^{(1)}, \ran^{(1)}]$, because $\ravlr^{(1)}(x)=x$. %\LC(section9db_proof-75)

Let us assume that $K>1$ and that the statement holds for $K-1$. %\LC(section9db_proof-79)
We can use the recurrence relation (\myref{eq:rec-ravl}) to calculate $\ravl\lkth$ as %\LC(section9db_proof-80)
\AL{
&\ravl\lkth(\vx_{:K})=x_{K} {\lpa}\mkth +\ravl\mkth(\vx_{:K-1}). 
}
When $\vx_{:(K-1)}\in\ransp^{K-1}$, by this recurrence relation and the assumption, it holds that %\LC(section9db_proof-86)
\AL{
\ravl\lkth(\vx_{:K})\in [& x_{K} {\lpa}\mkth -\ran\mkth,
\TW{\NN &}
x_{K} {\lpa}\mkth +\ran\mkth]. %\LC(section9db_proof-92)
}
By considering the range $x_{K}\in [-\ran_K, \ran_K]$ and using the recurrence relation (\myref{eq:rec-ran}), we can prove that the image of $\ravlr\lkth$ is contained in $[-\ran\lkth, \ran\lkth]$. %\LC(section9db_proof-95)
For any integers $n,k>0$, we have %\LC(section9db_proof-96)
\AL{
&
((k+n) {\lpa}\mkth-\ran\mkth)-(k {\lpa}\mkth+\ran\mkth)
\NN &
=n{\lpa}\mkth -2\ran\mkth\NN
&\geq{\lpa}\mkth-2\ran\mkth\NN
&>0,
}
where the last inequality is from Proposition \myref{prop:propB1}. %\LC(section9db_proof-108)
It tells us that the images of $\ravlr\lkth(x_1,\ldots,x_{K-1},x_{K})$ for different $x_{K}=k$ do not overlap. %\LC(section9db_proof-109)
Therefore $\ravlr\lkth$ is injective, due to Eq.~(\myref{eq:rec-ravl}) and the assumption. %\LC(section9db_proof-110)

\end{proof}

The proof of each statement in Proposition \myref{thm:construction} is as follows: %\LC(section9dc_proofAB-2)

\begin{proof}[Proof of Proposition \ref{thm:construction}]
~\paragraph{Statement (A).}
By definitions, it holds %\LC(section9dc_proofAB-10)
\AL{
\TW{&}
\mymod{\ravm(\vpos_i)-\ravm(\vpos_j)}{\lpa}
\TW{\NN}
&= \mymod{\ravl\dimth(\vect{\vpos_i})-\ravl\dimth(\vect{\vpos_j})}{\lpa}
\NN
&=\mymod{\ravl\dimth(\vect{\vpos_i}-\vect{\vpos_j})}{\lpa}\NN
&=\ravm(\vect{\vpos_i}-\vect{\vpos_j}).
}
for $\vpos_i ,\vpos_j\in\Inte^D$. %\LC(section9dc_proofAB-23)

~\paragraph{Statement (B).}

Let $\ravlr\dimth$ be the injective function defined as Proposition \myref{prop:propB2}. %\LC(section9dc_proofAB-31)
Then, the restricted map $h$ is represented as %\LC(section9dc_proofAB-33)
\AL{
\ravr(x)=\mymod{\ravlr\dimth(x)}{{\lpa}\dimth}.
}
This $h$ is also injective because the image of $\ravlr\dimth$ is contained in $[-\ran\dimth, \ran\dimth]$ and, by Proposition \myref{prop:propB1}, we have $\ran\dimth<{\lpa}\dimth-\ran\dimth$. %\LC(section9dc_proofAB-40)

~\paragraph{Statement (C).}

Let us assume $\vrel\in\lshsp \setminus \ransp$. %\LC(section9dd_proofC-5)
Then, there exists $K$ such that $\vrel_{:K-1}\in\ransp\mkth$ and $\rel_{K}\not\in [-\ran_K, \ran_K]$. %\LC(section9dd_proofC-6)
For such $K$, it holds $\ravl\mkth(\vrel_{:K-1})\in[-\ran\lkth, \ran\lkth]$ %\LC(section9dd_proofC-7)
by Proposition \myref{prop:propB2}. %\LC(section9dd_proofC-8)
Let us introduce the set $\cfil$ as follows: %\LC(section9dd_proofC-9)
\AL{
\cfil=\{x\in\Inte \mid {}&\exists b \in [-\ran\lkth, \ran\lkth], \TW{\NN
&}\mymod{(x-b)}{{\lpa}\lkth}=0\}.
}

When $\rel_K>\ran_K$, it holds $\ran_K<\rel_K<\lsh_K=\lpa_K-\ran_K$, and thus $\ran_K+1\le \rel_K \le \lpa_K-\ran_K-1$. %\LC(section9dd_proofC-19)
By using Proposition \myref{prop:propB1} and the recurrence relations Eqs.~(\myref{eq:rec-ravl}) and (\myref{eq:rec-ran}) we have %\LC(section9dd_proofC-21)
\AL{
\ravl\lkth(\vrel_{:K})&\in[(\ran_K+1){\lpa}\mkth-\ran\mkth,
\TW{\NN&\quad\;}
(\lpa_K-\ran_K-1){\lpa}\mkth+\ran\mkth]\NN
&\subset(\ran_K{\lpa}\mkth+\ran\mkth,
\TW{\NN&\quad}
(\lpa_K-\ran_K){\lpa}\mkth-\ran\mkth)\NN
&=(\ran\lkth,{\lpa}\lkth-\ran\lkth)\NN
&\subset \Inte \setminus \cfil.
}
Likewise, when $\rel_K<-\ran_K$, it holds $-\lpa_K+\ran_K=-\lsh_K<\rel_K<-\ran_K$, and thus $-\lpa_K+\ran_K+1\le \rel_K\le -\ran_K-1$. %\LC(section9dd_proofC-35)
Thus, we have %\LC(section9dd_proofC-37)
\AL{
\ravl\lkth(\vrel_{:K})&\in[(-\lpa_K+\ran_K+1){\lpa}\mkth-\ran\mkth,
\TW{\NN&\quad\;}
(-\ran_K-1){\lpa}\mkth+\ran\mkth]\NN
&\subset((-\lpa_K+\ran_K){\lpa}\mkth+\ran\mkth,
\TW{\NN&\quad\;}
-\ran_K{\lpa}\mkth-\ran\mkth)\NN
&=(-{\lpa}\lkth+\ran\lkth,-\ran\lkth)\NN
&\subset \Inte \setminus \cfil.
}
In either case, we have $\ravl\lkth(\vrel_{:K})\notin \cfil$. %\LC(section9dd_proofC-51)

Suppose that $\ravm(\vrel)\in\imset$, that is, there exists $\vrel'\in\ransp$ such that $\ravm(\vrel)=\ravm(\vrel')$. %\LC(section9dd_proofC-55)
We have %\LC(section9dd_proofC-57)
\AL{
\ravl\dimth(\tilde{\vrel})&=\Bigg(\sum_{k=K+1}^{D}\tilde{\rel}_k{\lpa}^{(k-1)}\Bigg)+\ravl\lkth(\tilde{\vrel}_{:K})
}
for both $\tilde{\vrel}\in\{\vrel, \vrel'\}$. %\LC(section9dd_proofC-63)
By Proposition \myref{prop:propB2} and $\vrel'_{:K}\in\ransp\lkth$, we have $\ravl\lkth(\vrel'_{:K})\in[-\ran\lkth, \ran\lkth]\subset\cfil$. %\LC(section9dd_proofC-65)
Then, because ${\lpa}^{(k-1)}$ is divisible by ${\lpa}\lkth$ for $k=K+1,\ldots,D$, we have $\ravl\dimth(\vrel)\not \in\cfil$ and $\ravl\dimth(\vrel')\in\cfil$. %\LC(section9dd_proofC-67)

For both $\tilde{\vrel}=\{\vrel, \vrel'\}$, we have %\LC(section9dd_proofC-71)
\AL{
\mymod{\ravm(\tilde{\vrel})}{{\lpa}\lkth} &=\mymod{(\mymod{\ravl\dimth(\tilde{\vrel})}{\lpa})}{{\lpa}\lkth} \NN
&=\mymod{\ravl\dimth(\tilde{\vrel})}{{\lpa}\lkth}
}
because $\lpa$ is divisible by ${\lpa}\lkth$, and thus $\ravl\dimth(\tilde{\vrel})\in\cfil \Leftrightarrow \ravm(\tilde{\vrel})\in\cfil$. %\LC(section9dd_proofC-79)
Since this contradicts the assumption $\ravm(\vrel)=\ravm(\vrel')$, it follows that $\ravm(\vrel)\not\in\imset$. %\LC(section9dd_proofC-81)

\end{proof}

\section{Calculations for spQUBO using DFT}
\label{sec-proof-fft}

Let us denote the two-dimensional DFT of function $\phi\colon\Inte^2\to\Real$ with period $\vlsh$ and its inverse as %\LC(section9e_fft-3)
\AL{
\fou\phi(\spos)=\sum_{\vx\in\InteN{\vlsh}}\phi(\vx)W_{\spos,\vx}\\
\ifou\phi(\spos)=\frac{1}{V}\sum_{\vx\in\InteN{\vlsh}}\phi(\vx)W_{\spos,-\vx}
}
where $W_{\spos, \vx}=\exp\AS{-2\pi \iunit \sum_{d=1}^{2}(\spos_d\vx_d/\lsh_d)}$ and $V=\lsh_1\lsh_2$. %\LC(section9e_fft-10)

Theorems 3 and 4 can be reduced to the case where $c_i=1$: %\LC(section9e_fft-14)

\begin{prop}

\label{thm:fft-mvm-base}
For a spatial coupling function $f$ of an spQUBO, let $a_i$ be elements of the matrix-vector product defined as %\LC(section9e_fft-19)
\AL{
a_i = \sumj f(\vdi-\vdj)v_j, %\LC(section9e_fft-23)
}
where $\vect{v}\in\Real^N$. %\LC(section9e_fft-27)
We define a function $\xi\colon\Inte^2\to\Real$ with period $\vlsh$ as %\LC(section9e_fft-29)
\AL{
\xi(\vx) = \begin{cases}
v_i & \text{if } \vx = \vdi, \\ %\LC(section9e_fft-34)
0 & \text{otherwise}
\end{cases}
}
for $\vx\in\InteN{\vlsh}$. %\LC(section9e_fft-40)
Then, it holds that %\LC(section9e_fft-42)
\AL{
a_i = \ifouAS{\fou{\xi}\odot\fou{f}}(\vdi), %\LC(section9e_fft-46)
}
where $\odot$ denotes the Hadamard element-wise product. %\LC(section9e_fft-50)

\end{prop}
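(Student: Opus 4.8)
The plan is to recognize the sum defining $a_i$ as a \emph{circular convolution} of $f$ and $\xi$ on the torus $\InteN{\vlsh}$, evaluated at the grid point $\vdi$, and then to establish the discrete convolution theorem for the DFT conventions fixed just above the proposition. First I would rewrite $a_i = \sumj f(\vdi-\vdj)v_j$ as a sum over the whole configuration domain. By construction $\xi$ deposits $v_j$ at $\vdj$ and vanishes elsewhere, so (assuming the grid points $\vdj$ are distinct, which is what makes $\xi$ well-defined) the index sum becomes a spatial sum
\[
a_i = \sum_{\vy\in\InteN{\vlsh}} f(\vdi-\vy)\,\xi(\vy),
\]
which is exactly the periodic convolution $(f*\xi)(\vdi)$, where the argument $\vdi-\vy$ is read modulo $\vlsh$ using the period-$\vlsh$ extension of $f$.

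Next I would prove the convolution identity directly rather than cite it, so the sign conventions of $\fou{\cdot}$ and $\ifou{\cdot}$ are matched explicitly. Expanding the right-hand side of the claim,
\[
\ifouAS{\fou{\xi}\odot\fou{f}}(\vdi) = \frac{1}{V}\sum_{\spos\in\InteN{\vlsh}} \fou{\xi}(\spos)\,\fou{f}(\spos)\,\fftw{\vdi}{-\spos},
\]
and substituting $\fou{\xi}(\spos)=\sum_{\vy}\xi(\vy)\fftw{\spos}{\vy}$ and $\fou{f}(\spos)=\sum_{\vz}f(\vz)\fftw{\spos}{\vz}$, I would use the symmetry and multiplicativity of the kernel, $\fftw{\spos}{\vy}\fftw{\spos}{\vz}\fftw{\vdi}{-\spos}=\fftw{\spos}{\vy+\vz-\vdi}$, to reorder the triple sum as
\[
\sum_{\vy}\sum_{\vz}\xi(\vy)\,f(\vz)\left(\frac{1}{V}\sum_{\spos\in\InteN{\vlsh}}\fftw{\spos}{\vy+\vz-\vdi}\right).
\]
The key computational step is the orthogonality relation $\tfrac{1}{V}\sum_{\spos\in\InteN{\vlsh}}\fftw{\spos}{\vect{w}}=\deltamod{\vect{w}}$, the periodic Kronecker delta equal to $1$ precisely when $\vect{w}\equiv\vect{0}\pmod{\vlsh}$ and $0$ otherwise; it factorizes across the two axes into geometric sums of roots of unity.

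Applying this collapses the $\vz$-sum onto the unique representative in $\InteN{\vlsh}$ with $\vz\equiv\vdi-\vy\pmod{\vlsh}$, and periodicity gives $f(\vz)=f(\vdi-\vy)$, yielding $\sum_{\vy}f(\vdi-\vy)\xi(\vy)=a_i$ and closing the argument. The main obstacle is conceptual rather than technical: it is the wrap-around. A naive convolution would require $\vdi-\vdj$ to lie in $\InteN{\vlsh}$, and it is only the hypothesis that the spQUBO is \emph{periodic} (so $f$ has period $\vlsh$) that lets the circular convolution reproduce $f(\vdi-\vdj)$ for every pair, including those whose difference exits the fundamental domain; I would flag exactly where this is used. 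I would also note the well-definedness of $\xi$ under distinct positions, and that coinciding positions merely sum the relevant $v_j$ without affecting the derivation. Finally, Theorem~3 and Theorem~4 follow by specializing $v_j=c_jx_j$: the MVP case is immediate after multiplying by $c_i$, and the Hamiltonian case follows from $F=\tfrac12\sum_i\xi(\vdi)a_i$ together with Parseval's identity, which converts the spatial pairing into the stated frequency-domain sum $\tfrac{1}{2V}\sum_{\spos}\|\fou{\xi}(\spos)\|^2\fou{f}(\spos)$.
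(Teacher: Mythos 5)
Your proposal is correct and follows essentially the same route as the paper's proof: rewrite $a_i$ as the periodic convolution $\sum_{\vy\in\InteN{\vlsh}}f(\vdi-\vy)\xi(\vy)$ using the distinctness of the $\vdi$, then apply the cyclic convolution theorem, which is valid precisely because $f$ has period $\vlsh$. The only difference is that you prove the convolution theorem in-line via the orthogonality relation while the paper simply invokes it; your explicit flagging of where periodicity and well-definedness of $\xi$ enter is a welcome addition but not a change of approach.
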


\begin{proof}

Because $\vpos_i$ are all different and by definitions, we can represent $\vect{a}$ as the convolution of $f$ and $\xi$, in particular, for a function %\LC(section9e_fft-54)
\AL{
a(\vx_1)=\sum_{\vx_2\in\InteN{\vlsh}}f(\vx_1-\vx_2)\xi(\vx_2), %\LC(section9e_fft-58)
}
the $i$-th element of $\vect{a}$ is represented as %\LC(section9e_fft-62)
\AL{
a_i&=a(\vpos_i). %\LC(section9e_fft-65)
}
Because $f$ has period of $\vlsh$, we can represent the cyclic convolution as %\LC(section9e_fft-69)
\AL{
a(\vx)=\ifouAS{{\fou{\xi}} \odot \fou{f}}(\vx), %\LC(section9e_fft-72)
}
and this completes the proof. %\LC(section9e_fft-75)

\end{proof}
This can be easily extended for the Hamiltonian calculation: %\LC(section9e_fft-79)

\begin{proof}
 [Proof of Theorem \myref{thm:fft-hamiltonian}]
Let $\vect{a}$ and $\xi$ be as defined in Proposition \myref{thm:fft-mvm-base}, with $v_i=c_ix_i$, so that $a_i = \ifouAS{{\fou{\xi}}\cdot \fou{f}}(\vpos_i)$. %\LC(section9e_fft-82)
Because $\vpos_i$ are all different, we have %\LC(section9e_fft-83)
\AL{
\hamil&=\frac{1}{2}\sumi v_i a_i\NN
&=\frac{1}{2}\sum_{\vx\in\InteN{\vlsh}} \xi(\vx)\cdot \AS{\ifouAS{{\fou{\xi}}\cdot \fou{f}}(\vx)}\NN
&=\frac{1}{2V}\sum_{\spos\in\InteN{\vlsh}} \overline{\fou{\xi}(\spos)}\cdot \AS{{\fou{\xi}}(\spos)\cdot \fou{f}(\spos)}\NN
&=\frac{1}{2V}\sum_{\spos\in\InteN{\vlsh}} \|\fou{\xi}(\spos)\|^2\cdot \fou{f}(\spos).
}

\end{proof}
Theorem \myref{thm:fft-mvm} can also be obtained as a corollary of Proposition \myref{thm:fft-mvm-base}. %\LC(section9e_fft-95)

\section{MVPs in Ising solvers}
\label{sec-mvms-in-solver}

As mentioned in Introduction, many of the computational principles of physical Ising solvers can be written in dynamical systems \mycite{CAC,CACScaling,MA,SBM}. %\LC(section9f_mvms_in_solver-3)
The MVP $J\vect{x}$ argued in the text is incorporated in these systems as below, showing the potential benefits of its improvement. %\LC(section9f_mvms_in_solver-5)

If we model the behavior of the coherent Ising machine (CIM) \mycite{CIMPRA2013,CIMScience2016,CIMAPL2020} as classical dynamics with a time interval $\Delta t$, we obtain %\LC(section9f_mvms_in_solver-9)
\AL{
\tau\frac{\vect{q}_{k}-\vect{q}_{k-1}}{\Delta t}&=c(-\vect{q}_{k-1}^3+a\vect{q}_{k-1})+J\vect{q}_{k-1},
}
where $\vect{q}_k$ is the system state at the $k$-th iteration, $\tau$ is the time constant, $c$ and $a$ are the other system parameters, and $\vect{q}^n$ means the element-wise $n$-th power of the vector. %\LC(section9f_mvms_in_solver-15)
This system can be interpreted as the gradient system or the gradient descent method on the energy function $\hamil(\vect{q})$, defined as %\LC(section9f_mvms_in_solver-17)
\AL{
E(\vect{q})&=-\frac{1}{2}\vect{q}\tr J\vect{q}+\vect{\phi}_a(\vect{q}),\\ %\LC(section9f_mvms_in_solver-20)
\phi_a(\vect{q})&=\frac{c}{4}\AS{\vect{q}^4-2a\vect{q}^2}.
}
This function simulates the Ising Hamiltonian well when each amplitude is close to one $(q_i)^2\approx 1$. %\LC(section9f_mvms_in_solver-25)
It is shown that the performance is improved by adding the controlling terms for this condition \mycite{CAC,CACScaling}. %\LC(section9f_mvms_in_solver-26)

In contrast, the simulated bifurcation machine (SBM) \mycite{SBM} simulates the following dynamical system: %\LC(section9f_mvms_in_solver-31)
\AL{
\tau_q\frac{\vect{q}_k-\vect{q}_{k-1}}{\Delta t}&=\vect{p}_{k-1}\\
\tau_p\frac{\vect{p}_k-\vect{p}_{k-1}}{\Delta t}&=c(-\vect{q}_{k}^3+a\vect{q}_k) + J \vect{q}_k,
}
where $\vect{q}_k$ and $\vect{p}_k$ are the system states at the $k$-th iteration, and $\tau_p, \tau_q$ are the time constants. %\LC(section9f_mvms_in_solver-38)
It is also related to the energy function $E(\vect{q})$, as it simulates the Hamiltonian system defined with $E(\vect{q})$. %\LC(section9f_mvms_in_solver-39)

In momentum annealing (MA) \mycite{MA}, two layers $\vect{\sigma}^L,\vect{\sigma}^R$ of spin vectors to be optimized are prepared. %\LC(section9f_mvms_in_solver-43)
These two layers of spins are coupled with their coefficient matrix $J$ and additional inter-layer couplings represented by a diagonal coefficient matrix $W$. %\LC(section9f_mvms_in_solver-45)
Then, they are updated following the Gibbs sampling procedure, whose update formula is expressed as %\LC(section9f_mvms_in_solver-46)
\AL{
\vect{I}_k&=\vect{h}+(J+W)\vect{\sigma}_{k-1},\LN{eq:MA-inte}
\sigma_k&=\sgn \AS{\vect{I}^{(k)}+\frac{T_k}{2}\vect{\Gamma}_k \vect{\sigma}_{k-2}},\LB{eq:MA-flip}
}
where $\vect{\sigma}_k$ represents $\vect{\sigma}^L$ or $\vect{\sigma}^R$ depending on the parity of $k$, $T_k$ is the temperature parameter, and $\Gamma$ is a gamma distribution with the shape and scale parameters set to one. %\LC(section9f_mvms_in_solver-54)

\section{Settings in numerical example of placement problem}
\label{sec-placement-numerical}

For the example placement problem used in the text, we consider the region of interest represented as $X=[0, H]\times [0, W]$ for $H,W>0$. %\LC(section9gb_placement_numerical-3)
For $B\in\Nnum$, let us consider the grid points obtained by dividing a square of unit side length into $B^2$ cells. %\LC(section9gb_placement_numerical-4)
Namely, the $i$-th grid point $\vect{p}_i$ corresponds to the integer grid point $\vpos_i = (n, m)\tr$ as %\LC(section9gb_placement_numerical-5)
\AL{
\vect{p}_i=\AS{\frac{n}{B},\frac{m}{B}},
}
where $n=0,\ldots,HB,m=0,\ldots,WB$. %\LC(section9gb_placement_numerical-11)
We assign spins to every grid point, so the number of spins is $O(HWB^2)$. %\LC(section9gb_placement_numerical-12)

We set the parameters for the interaction as follows: %\LC(section9gb_placement_numerical-16)
Let us denote the area of the circle with radius $\rho$ by $S=\pi \rho^2$. %\LC(section9gb_placement_numerical-18)
We assume the point-wise utility $\xi_\vz$ at $\vz$ using the density of the facilities within the area of radius $\rho$ from $\vz$, denoted by $x_\vz=n_{\vz}/S$. %\LC(section9gb_placement_numerical-20)
Specifically, we assume %\LC(section9gb_placement_numerical-21)
\AL{
\xi_\vz=-(\tilde{a}/2)(x_\vz/K-1)^2,
}
where $\tilde{a}$ is the scaling parameter of the utility and $K$ is the reference density. %\LC(section9gb_placement_numerical-28)
The number of facilities in the circle with radius $\rho$ at the reference density is $\nu\equiv SK$. %\LC(section9gb_placement_numerical-30)
Thus, we have %\LC(section9gb_placement_numerical-32)
\AL{
\xi_\vz=-(\tilde{a}/2)(n_{\vz}/\nu-1)^2.
}
For $(a,b)=(\tilde{a}/2\nu^2, \tilde{a}/\nu)$, we can compute the integral of the utility up to the constant similarly to the text as %\LC(section9gb_placement_numerical-39)
\AL{
\Xi &= \int_\vz (-an_\vz^2+bn_\vz) d\vz = -a \sumij W'_{ij}x_ix_j+b\sumi b'_i x_i,
}
where  %\LC(section9gb_placement_numerical-47)
\AL{
W'_{ij}&=f(\rel_{ij})=\begin{cases} %\LC(section9gb_placement_numerical-51)
2(\rho^2\theta_{ij} - \rel_{ij}^2\tan\theta_{ij}/4)&(\rel_{ij}<2\rho)\\
0&(\rel_{ij}\ge 2\rho)
\end{cases},\\
b'_i&=S, %\LC(section9gb_placement_numerical-55)
}
for $r_{ij} = \|\vect{p}_i-\vect{p}_j\|$ and $\theta_{ij}=\cos^{-1} (\rel_{ij}/2\rho)$. %\LC(section9gb_placement_numerical-59)
We set $\tilde{a}=\nu/S$ such that the bias values $bb'_i$ become a unit, and set $(H, W, K, \rho)=(5, 5, 2, 0.25)$. %\LC(section9gb_placement_numerical-60)
We used $B=40$ for Fig.~\myref{fig:placement_qubo}, and varied $B=1,2,\ldots, 40$ for Fig.~\myref{fig:fig2}. %\LC(section9gb_placement_numerical-61)
Using Theorem \myref{thm:padding-spcm} in the main text, we converted the problem to the equivalent two-dimensional periodic spQUBO. %\LC(section9gb_placement_numerical-62)
The locality parameter is set to be the minimum number such that it is no less than $2\rho B$. %\LC(section9gb_placement_numerical-63)

We randomly assigned the placement cost for each point as follows: %\LC(section9gb_placement_numerical-67)
Let the placement cost at $\vect{p}_i$ be assigned as $\plcost_i=f(\vect{p}_i)$ for $f\colon X\to\Real$. %\LC(section9gb_placement_numerical-69)
The cost distribution $f$ consists of {\it line} and {\it blob} components. %\LC(section9gb_placement_numerical-70)
In particular, it is defined as %\LC(section9gb_placement_numerical-71)
\AL{
f(x)&\equiv{A}\exp(\tilde{f}(x))-{B},\\ %\LC(section9gb_placement_numerical-74)
\tilde{f}(x)&\equiv\tilde{A}\left(\coline\sum_{i=1}^{\nline}\fline_i(x)\TW{\right.\NN
&\quad\left.}+\sum_{i=1}^{\nblob}\coblob_i\fblob_i(x)\right)-\tilde{B},\\
\fline_i(x)&\equiv\exp\AS{-\frac{\langle x-m_i,v_i\rangle^2}{2(\sline)^2}}\quad(i=1,\ldots,\nline),\\
\fblob_i(x)&\equiv\exp\AS{-\frac{\| x-\mu_i\|^2}{2(\sblob)^2}}\quad(i=1,\ldots,\nblob),
}
with the parameters below: %\LC(section9gb_placement_numerical-81)
There are $\nline=4$ line components determined by the vectors $m_i, v_i \in \Real^2$, which represent the center and the normal direction of the line components and set as %\LC(section9gb_placement_numerical-83)
\AL{
(4m_i)\tr&=(1,2),(2,1),(3,2),(2,3),\\
v_i\tr&=(1,0),(0,1),(1,0),(0,1), %\LC(section9gb_placement_numerical-87)
}
respectively. %\LC(section9gb_placement_numerical-90)
There are also $\nblob$ blob components whose centers are determined by the vectors $\mu_i\in X$, each of which is uniformly randomly sampled from $X$. %\LC(section9gb_placement_numerical-91)
The scales of the line and blob components are determined by the parameters $\sline,\sblob > 0$, respectively. %\LC(section9gb_placement_numerical-92)
$\coblob_i$, which is drawn uniformly randomly from $[-1, +1]$, represents the sign and weight of each blob component, and the balance between lines and blobs is controlled by $\coline$.
$\tilde{f}$ represents the relative cost distribution and we rescaled it to obtain the cost distribution $f$: $\tilde{A},\tilde{B}$ are chosen so that the mean and variance of $\{\tilde{f}(\vpos_i)\}$ are $0$ and $(\sigma^{(f)})^2$, respectively, and $A,B$ are set such that the minimum and the maximum value of $f(p_i)$ are $0$ and $\cmax$, respectively, where $\cmax$ is the scale parameter for costs.
We set the parameters as $(\coline, \sline, \nblob, \sblob, (\sigma^{(f)})^2, \cmax) = (10, 0.05, 1000, 0.3, 1.5, 1.5)$. %\LC(section9gb_placement_numerical-96)

For Figure 4, the parameters for the MA were set as follows: %\LC(section9gb_placement_numerical-100)
\AL{
% p_k& = \max (0, p_0 (1 - k / T))\\
p_k& = p_0 (1 - k / T)\\ %\LC(section9gb_placement_numerical-105)
c_k& = 1\\ %\LC(section9gb_placement_numerical-106)
T_k& =  \frac{T_0\theta_1}{\theta_1 + \log(1 + \theta_2(k-1)/T)}, %\LC(section9gb_placement_numerical-107)
}
where $p_k, c_k$ and $T_k$ are the same parameter as the original study \mycite{MA} of MA for each $k$-th step, $T=10000$ is the number of steps, and $(T_0, p_0, \theta_1, \theta_2) = (0.1, 0.05, 0.001, 0.1)$. %\LC(section9gb_placement_numerical-111)
We set the inter-layer coupling strengths as $w_i = \sum_j |J_{ij}|/2$ for all spins, which are a more simplified choice than the original study. %\LC(section9gb_placement_numerical-112)

\section{Settings in numerical example of clustering problem}
\label{sec-clustering-synthetic}

For the example clustering problem used in the text, we used the following settings. %\LC(section9gc_clustering_numerical-4)
For $B\in\Nnum$, let us consider the grid points obtained by equally dividing the unit square $[0, 1]^2$ into $(B-1)^2$ cells. %\LC(section9gc_clustering_numerical-6)
Namely, each point $x_i$ is represented as %\LC(section9gc_clustering_numerical-7)
\AL{
\vx_i=\AS{\frac{n}{B-1},\frac{m}{B-1}}
}
for $n,m\in\{0,\ldots,B-1\}$. %\LC(section9gc_clustering_numerical-13)
We consider $K$ clusters, and generated $M$ points for each the $k$-th cluster $C_k$ as follows: %\LC(section9gc_clustering_numerical-14)
Let us suppose the center $\vect{v}_k\in\Real^2$ and the scale parameter $\sigma_k>0$ for each $k$. %\LC(section9gc_clustering_numerical-16)
We compute the density parameter %\LC(section9gc_clustering_numerical-17)
\AL{
f_{k,i}=\frac{1}{\sqrt{2\pi\sigma_k^2}}\exp\AS{-\frac{(x-v_k)}{2\sigma_k^2}} %\LC(section9gc_clustering_numerical-20)
}
for each $k$ and grid point $x_i$. %\LC(section9gc_clustering_numerical-23)
The cluster $K_i$ is assigned to each grid point $x_i$ such that it maximizes the density parameter: %\LC(section9gc_clustering_numerical-24)
\AL{
K_i=\argmax_k f_{k,i}. %\LC(section9gc_clustering_numerical-27)
}
For each $k$-th cluster, we sample $M$ points without replacement using selection probabilities proportional to $f_{k,i}$. %\LC(section9gc_clustering_numerical-30)

Specifically, we set $(B,K,M)=(51,7,100)$. %\LC(section9gc_clustering_numerical-34)
In addition, we used the parameters for the density parameter given by %\LC(section9gc_clustering_numerical-35)
\AL{
v_k&=0.1v'_k\\ %\LC(section9gc_clustering_numerical-38)
({v'_k})\tr&= (1,1),(1,9),(9,1),(9,9),\TW{\NN
&\quad\quad} (3,5),(6,3),(6,7)
}
and $\sigma_k=0.1$ for all $k$. %\LC(section9gc_clustering_numerical-44)
The weight parameter $C$ for the Hamiltonian $H=H_{\mathrm{A}}+CH_{\mathrm{B}}$ is set to $C=50$. %\LC(section9gc_clustering_numerical-46)
The parameters for the MA were set as follows: %\LC(section9gc_clustering_numerical-48)
\AL{
p_k& = \max (0, p_0 - k / T)\\ %\LC(section9gc_clustering_numerical-52)
c_k& = \min (1, \sqrt{\theta_1 k / T})\\ %\LC(section9gc_clustering_numerical-53)
T_k& =  \frac{T_0}{ \log(1 + k)}, %\LC(section9gc_clustering_numerical-54)
}
where $p_k, c_k$ and $T_k$ are the same parameter as the original study \mycite{MA} of MA for each $k$-th step, $T=1000$ is the number of steps, and $(T_0, p_0, \theta_1) = (1, 0.5, 2)$. %\LC(section9gc_clustering_numerical-58)
The inter-layer coupling strengths are set similarly as the case of the placement problem. %\LC(section9gc_clustering_numerical-59)

\end{document}